\theoremstyle{definition}
\newtheorem{definition}{Definition}[section]
\newtheorem{prop}{Property}[section]
\newtheorem{example}{Example}[section]
\newtheorem{theorem}{Theorem}[section]
\newcommand{\wnf}{\psi_n}
\newcommand{\mdf}{\text{ (mod }\nicefrac{n}{4}\text{)}}
\renewcommand\Re{\operatorname{\mathfrak{Re}}}
\renewcommand\Im{\operatorname{\mathfrak{Im}}}
\begin{document}

\title{Generating and Searching Families of FFT Algorithms\thanks{August 30, 2011, preprint. To appear in the Journal of Satisfiability, Boolean Modeling and Computation.}}

\author{\name Steve Haynal \email steve@softerhardware.com \\
	 \addr SofterHardware
       \AND
       \name Heidi Haynal \email heidi.haynal@wallawalla.edu \\
	 \addr Department of Mathematics and Computer Science \\
	 Walla Walla University}
       
\maketitle

\begin{abstract}

A fundamental question of longstanding theoretical interest is to prove the lowest exact count of real additions and multiplications required to compute a power-of-two discrete Fourier transform (DFT). For 35 years the split-radix algorithm held the record by requiring just $4n \log_2n-6n+8$ arithmetic operations on real numbers for a size-$n$ DFT, and was widely believed to be the best possible. Recent work by Van Buskirk and Lundy demonstrated improvements to the split-radix operation count by using multiplier coefficients or ``twiddle factors'' that are not  $n^{th}$ roots of unity for a size-$n$ DFT.

This paper presents a Boolean Satisfiability-based proof of the lowest operation count for certain classes of DFT algorithms. First, we present a novel way to choose new yet valid twiddle factors for the nodes in flowgraphs generated by common power-of-two fast Fourier transform algorithms, FFTs. With this new technique, we can generate a large family of FFTs realizable by a fixed flowgraph. This solution space of FFTs is cast as a Boolean Satisfiability problem, and a modern Satisfiability Modulo Theory solver is applied to search for FFTs requiring the fewest arithmetic operations. Surprisingly, we find that there are FFTs requiring fewer operations than the split-radix even when all twiddle factors are $n^{th}$ roots of unity.

\end{abstract}

\keywords{Fast Fourier Transform, FFT, SMT-Solver, Boolean Modeling}


\section{Introduction}

In 1965 Cooley and Tukey\cite{cooley1965algorithm} started a revolution in digital signal processing when they introduced their fast Fourier transform algorithm (FFT). Their FFT required only $O(n \log n)$ addition and multiplication floating-point operations on real numbers, or FLOPs, rather than the $O(n^2)$ FLOPs required to directly compute a discrete Fourier transform. Although discovered previously\cite{1866-gauss-3}\cite{rockmore2002fft}, it was Cooley and Tukey's timing, which coincided with the beginning of widespread use and availability of digital computers, that led to its success. The FFT and related algorithms have now found a wide range of application, including electroacoustic music, audio signal processing, medical imaging, image processing, pattern recognition, computational chemistry, error correcting codes, spectral methods for PDEs and harmonic analysis\cite{1988-brigham}\cite{rockmore2002fft}.

After the FFT's introduction, there was considerable work on further lowering the FLOP count. This was of particular interest since addition and especially multiplication were expensive with the computer hardware available at that time. One result that stands out is the work done by Yavne\cite{1968-yavne} in developing an initial split-radix\cite{duhamel1990fast}\cite{sorensen2003computing} algorithm with a $4n \log_2n-6n+8$ FLOP count for a size-$n$ FFT where $n$ is some power of two, $n=2^m$. Other important work minimized the number of multiplications but not the total arithmetic complexity\cite{heideman2003number}\cite{heideman1988multiplicative}\cite{duhamel2002algorithms}\cite{winograd1980arithmetic}. In 2004 Lundy and Van Buskirk\cite{lundy2007new} demonstrated improvements to the split-radix operation count by using using constant complex-value multiplier coefficients or \textbf{twiddle factors} that are not $n^{th}$ roots of unity for a size-$n$ DFT. Frigo and Johnson\cite{johnson2007modified} generalized Van Buskirk's pioneering work in the context of optimizing the conjugate-pair split-radix algorithm\cite{kamarconjugate}. Bernstein\cite{bernstein2007tangent} then described Johnson's algorithm, which is distinct from Van Buskirk's, in terms of algebraic twisting and named it the tangent FFT. In this paper, we refer to Johnson's algorithm and Bernstein's variation of it, differing only in decimation in time versus frequency, as the \textbf{tangent FFT}. Van Buskirk's algorithm and the tangent FFT exhibits a modest $(\sim5.6\%)$ reduction in FLOP count when compared to the split-radix, requiring roughly $\frac{34}{9} n \log_2 n$ operations rather than the previous $4n \log_2n-6n+8$.

This paper presents a proof of the lowest FLOP count for certain classes of DFTs. It is beyond the scope of this paper to consider all possible DFTs in our proof. Instead, we focus on the common power-of-two complex FFTs and the flowgraphs\cite{cnx:m16352} implied by them. This scope still includes a rich set of FFTs as our experiments confirm what others have seen\cite{duhamel1990fast}; common power-of-two complex FFTs (radix-2, radix-4, decimation-in-time or decimation-in-frequency split-radix, conjugate split-radix, classic or any twisted) all exhibit the same flowgraph structure (they are graph isomorphisms) but have different twiddle factors assigned to the flowgraph nodes. Furthermore, we restrict our scope to FFTs where twiddle factors are $n^{th}$ root of unity. This excludes Van Buskirk's algorithm and the tangent FFT, but we still can show that other algorithms with lower FLOP count than the traditional split-radix exist.

In 1962, a few years before Cooley and Tukey introduced their FFT, Davis \emph{et al.} developed a machine program for theorem proving\cite{davis1962machine}, now referred to as the Davis--Putnam--Logemann--Loveland or DPLL algorithm, which is still at the core of modern Boolean Satisfiability or SAT solvers. In the past decade, several advances and refinements to DPLL have made it practical for larger problems\cite{silva1997grasp}\cite{moskewicz2001chaff}\cite{een2006translating}. New conflict-driven clause-learning (CDCL) SAT solvers, which incorporate these recent advances, are now commonly used in industry to verify hardware and software correctness. Current SAT research benefits from industrial sponsorship and an active community, which organizes conferences and competitions, creates challenge problems, and defines problem formats\cite{SATLive}\cite{SATCompetitions}\cite{ranise2006smt}. Recently, Satisfiability Module Theories (SMT) generalize SAT beyond binary variables to incorporate higher-level theories such as bitvectors, lists and arrays\cite{nieuwenhuis2006solving}\cite{ranise2006smt}. SMT solvers range from those that simply reduce a higher-level theory to Boolean logic for a SAT solver, to those that extend the core decision procedure to accommodate higher-level theories. 

In this paper, we apply a modern SMT solver to find a lowest FLOP count algorithm for the class of FFTs considered. First, we present a novel way to choose new yet valid twiddle factors for the nodes in a FFT flowgraph. This technique is more general and leads to a richer solution space than the twisting\cite{bernstein2007tangent}\cite{2008-mateer}, an algebraic way to correctly change the value of twiddle factors. This solution space of FFTs is cast as a SAT problem using quantifier-free formulas over the theory of fixed-size bitvectors, specified in SMT-LIB format\cite{ranise2006smt}, and searched with existing SMT solvers\cite{brummayer2009boolector}\cite{yices}\cite{mathsat4}\cite{een2006translating}. After applying partitioning techniques, we are able to find 6616 FLOP count algorithms for size-$256$ FFTs, and 15128 FLOP count algorithms for size-$512$ FFTs. These numbers are lower than traditional split-radix, 6664 for size-$256$ and 15368 for size-$512$, but not as low as achieved by Van Buskirk's algorithm\cite{lundy2007new} or the tangent FFT\cite{johnson2007modified}\cite{bernstein2007tangent}, 6552 for size-$256$ and 15048 for size-$512$, due to our constraint that complex twiddle factors must have modulus one, an absolute value of one.

Although we supply code for a witness size-$256$ FFT requiring fewer operations than a traditional split-radix\cite{fftexamples}, we are not addressing algorithm design in this paper. An objective to minimize FLOP count is primarily academic given the capabilities of modern computing hardware. We use it only as a well-defined and widely-understood objective to introduce and demonstrate the power of our formulation and search. We believe the ideas presented in this paper can be used to do FFT algorithm design where twiddle factors are not all $n^{th}$ roots of unity, specific hardware is targeted, or other objectives such as overall performance or accuracy are pursued, but these are the topics of a future paper.

This paper continues with an introduction to the DFT, with emphasis on defining concepts central and unique to this paper. In Section \ref{sec:firsthalf}, we present a FFT flowgraph representation for generating a family of FFTs. This formulation of the solution space is tailored so that it can be easily cast as a SMT problem. Section \ref{sec:secondhalf} introduces a first SMT problem formulation and then develops symmetry reduction and partitioning ideas which allows us to solve larger problems. Finally, we conclude with discussion of our experiments and results, application to FFT algorithm design, and future work. 

\subsection{Definitions}
\label{sec:definitions}
The DFT (discrete Fourier transform) is a specific kind of Fourier transform whose input is a sequence of numbers instead of a function.  The sequence of numbers is often obtained by sampling a continuous function.  Throughout this paper, let $n=2^m$, let $i^2=-1$, and let $\omega_n$ represent the complex $n^{th}$ root of unity $e^{-i\frac{2\pi}{n}}$. The $n$-tuple of complex numbers $(a_0,\,a_1,\,a_2,\,\dotsc,a_{n-1})$ is transformed by the DFT into another $n$-tuple of complex numbers $X(k)$ according to the formula 
\begin{equation*}
  X(k) = \sum_{j=0}^{n-1} a_j \omega_n^{jk}.
\end{equation*}
It is well-known that the complex size-$n$ DFT is a linear operator on $\mathbb{C}^n$ and can be represented as multiplication by an $n \times n$ Vandermonde matrix. For our purposes, it is better to identify the entries of the $n$-tuple with the coefficients of the polynomial
\begin{equation*}
  f(x) = a_0 + a_1x + a_2x^2 + \dotsb + a_{n-1}x^{n-1} \in \mathbb{C}[x].
\end{equation*}
Then computing the DFT for a given $n$-tuple is equivalent to evaluating the polynomial $f$ at each of the $n^{th}$ roots of unity $\omega_n^k$, for $k = 0,\,1,\,2,\,\dotsc, n-1$. That is, $X(k) = f(\omega_n^k)$. So each output value of the DFT is a weighted sum of the $a_j$, where the weight of $a_j$ in $X(k)$ is $\omega_n^{jk}$. 

When an FFT is used to compute a size-$n$ DFT, with twiddle factors of modulus one, the product of all the twiddle factors applied to $a_j$ in the computation of $X(k)$ equals the weight $\omega_n^{jk}$. We'd like to keep track of the accumulated weight on any given $a_j$ through all of the intermediate FFT results. To do this, we employ the polynomial view introduced by Fiduccia in \cite{fiduccia1972polynomial} and elaborated by Bernstein in \cite{bernstein2007tangent} and Burrus in \cite{cnx:m16327}. Associating the input to a polynomial of degree $n-1$ with coefficients $a_j$ means that an intermediate FFT result is associated to a polynomial of lower degree whose coefficients are weighted sums of the $a_j$. For example, when $n=8$ and
\begin{equation*}
f(x) = a_0 + a_1x + a_2x^2 + a_3x^3 + a_4x^4 + a_5x^5 + a_6x^6 + a_7x^7, 
\end{equation*}
two of the intermediate results of the radix-2 FFT are
\begin{equation*}
 f\text{ mod} (x^2 - \omega_8^0) = (a_0 + a_2 + a_4 + a_6) + (a_1 + a_3 + a_5 + a_7)x
\end{equation*}
and
\begin{align*}
  f\text{ mod} (x^2 - \omega_8^4) &= (a_0 - a_2 + a_4 - a_6) + (a_1 - a_3 + a_5 - a_7)x\\
                                  &= (a_0 + a_2\omega_8^4 + a_4 + a_6\omega_8^4) + (a_1 + a_3\omega_8^4 + a_5 + a_7\omega_8^4)x.
\end{align*}
Each of the coefficients in the two linear polynomials above is represented by a node in a flowgraph, which we describe in the next section. First, we define some characteristics of these coefficients.

\begin{definition}
The \textbf{\emph{base}} of a coefficient is the $a_j$ of lowest index that appears in the weighted sum comprising that coefficient.
\label{defn:base}
\end{definition}

\begin{definition}
The \textbf{\emph{stride}} of a coefficient is the integer difference between the indices of any two successive $a_j$ in the weighted sum, when the terms of the sum are written with the indices in strictly increasing order. When the coefficient consists of a single $a_j$, the stride is defined as $n$, the size of the DFT.
\label{defn:stride}
\end{definition}

In the polynomials above, the constant terms have base $a_0$, the linear terms have base $a_1$, and all four coefficients have stride 2.  If the example above is continued to determine $f\text{ mod}(x^8-1)$, each coefficient $a_j$ will have base $a_j$ and stride 8. For any $k$, the output value $X(k)$ has base $a_0$ and stride 1.

\begin{definition}
The \textbf{\emph{weight stride}}, $\boldsymbol{W_s}$, of a coefficient is the integer difference (mod $n$) of the powers put on $\omega_n$ to form the weights on any two successive $a_j$ in the coefficient, when the terms of the coefficient are written with the indices in strictly increasing order. 
\label{defn:weightstride}
\end{definition}

The $W_s$ of each coefficient in $f\text{ mod} (x^2 - \omega_8^0)$ above is 0. The $W_s$ of each coefficient in $f\text{mod } (x^2 - \omega_8^4)$ above is 4. For $f\text{ mod}(x^8-1)$ in the example above, there is no \emph{combination} of the $a_j$ comprising any coefficient, so the $W_s$ of each of the eight original coefficients is defined to be zero.  The $W_s$ for $X(k) = a_0\omega_n^0 + a_1 \omega_n^k + a_2 \omega_n^{2k} + \dotsb + a_{n-1}^{(n-1)k}$ is $k$. 

\begin{definition}
The \textbf{\emph{weight on base}}, $\boldsymbol{W_b}$, of a coefficient in an intermediate FFT result is the integer power (mod $n$) to which $\omega_n$ has been raised to form the accumulated weight on the base of the coefficient.
\label{defn:weightonbase}
\end{definition}

The $W_b$ of each of the four coefficients in the example, indeed of any coefficient from the radix-2 FFT, is zero.  To find an example of coefficients with nonzero $W_b$ among the common FFTs, we'll consider the size-8 twisted FFT. Given $f(x)$ as in the example, the remainder $f$ mod$(x^4+1)$ determines the remainder $f(\omega_8 x)$ mod$(x^4-1)$ as described in \cite{bernstein2007tangent} and \cite{2008-mateer}. It follows that one of the intermediate results of the size-8 twisted FFT is
\begin{align*}
  f(\omega_8 x)\text{ mod}(x^2 - 1) &= (a_0 - a_4 + \omega_8^2[a_2 - a_6]) + (\omega_8^1[a_1 - a_5] + \omega_8^3[a_3 - a_7])x\\
                                  &= (a_0 + a_2\omega_8^2 + a_4\omega_8^4 + a_6\omega_8^6) + (a_1\omega_8^1 + a_3\omega_8^3 + a_5\omega_8^5 + a_7\omega_8^7)x.
\end{align*}
So we see that the coefficient of the linear term has base $a_1$ and $W_b = 1$. This $W_b$ as well as the other definitions from this section are visually summarized in Figure \ref{fig:definitions}.

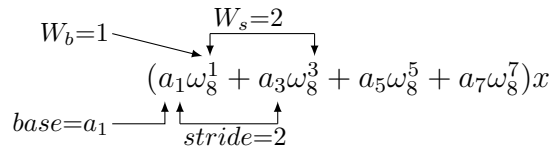
\begin{figure}[h]
\centering
\begin{tikzpicture}[>=latex]

\node(nd) [rectangle] {\large{$(a_1\omega_8^1 + a_3\omega_8^3 + a_5\omega_8^5 + a_7\omega_8^7)x$}};

\node(lta) [rectangle,above=6mm,left=1.7cm] at (nd.center) {};
\node(ltae) [rectangle,above=3mm,left=1.7cm] at (nd.center) {};
\draw[->] (lta.center) -- (ltae.center);

\node(rta) [rectangle,above=6mm,left=3.1mm] at (nd.center) {};
\node(rtae) [rectangle,above=3mm,left=3.1mm] at (nd.center) {};
\draw[->] (rta.center) -- (rtae.center);

\draw[-] (lta.center) -- (rta.center);

\node(wstride) [rectangle,anchor=south west,inner sep=-1mm] at (lta.north east) {\small{$W_{s}$=$2$}};

\node(lba) [rectangle,below=6mm,left=2.1cm] at (nd.center) {};
\node(lbae) [rectangle,below=3mm,left=2.1cm] at (nd.center) {};
\draw[->] (lba.center) -- (lbae.center);

\node(rba) [rectangle,below=6mm,left=8.0mm] at (nd.center) {};
\node(rbae) [rectangle,below=3mm,left=8.0mm] at (nd.center) {};
\draw[->] (rba.center) -- (rbae.center);

\draw[-] (lba.center) -- (rba.center);
\node(stride) [rectangle,anchor=north west,inner sep=-1mm] at (lba.south east) {\small{$stride$=$2$}};

\node(lbb) [rectangle,below=6mm,left=3cm] at (nd.center) {};
\node(lbbe) [rectangle,below=3mm,left=2.3cm] at (nd.center) {};
\draw[->] (lbb.center) -| (lbbe.center);
\node(base) [rectangle,anchor=east,inner sep=-1mm] at (lbb.west) {\small{$base$=$a_1$}};

\node(tbb) [rectangle,above=6mm,left=3cm] at (nd.center) {};
\node(tbbe) [rectangle,above=3mm,left=1.8cm] at (nd.center) {};
\draw[->] (tbb.center) -- (tbbe.center);
\node(wbase) [rectangle,anchor=east,inner sep=-1mm] at (tbb.west) {\small{$W_{b}$=$1$}};

\end{tikzpicture}
\caption{Definitions}
\label{fig:definitions}
\end{figure}

\section{A Flowgraph Representation for \\ Generating a Family of FFT Algorithms}
\label{sec:firsthalf}

Signal flowgraphs are a widely used formalism to represent and analyze FFTs\cite{cnx:m16352}\cite{1988-brigham}. In this section we show how the concepts defined in Section \ref{sec:definitions} occur in flowgraphs of common power-of-two FFTs. In particular, we will show that each node in a given flowgraph can be labeled with a 3-tuple, $(stride,base,W_s)$, which is an invariant for a family of FFT algorithms that can be realized by that given flowgraph. This invariant can then be used to generate FFT instances realizable by that flowgraph. 

To facilitate the discussion, we show two example flowgraphs. The first, shown in Figure \ref{fig:radix2fft}, is Gauss' original FFT\cite{1866-gauss-3}\cite{bernstein2007tangent}. The second, shown in Figure \ref{fig:conjugatesplitradixfft}, represents a size-16 conjugate split-radix as discussed in \cite{johnson2007modified}\cite{kamarconjugate}.

\subsection{Edges and Nodes}
Each directed edge represents the transfer of a complex number, either into or out of a node. In an algorithmic implementation of the flowgraph, each edge is indeed a single concrete complex number, but for the purposes of our flowgraph analysis, this complex number should be thought of symbolically as a weighted sum of the $a_j$, where the weight on any $a_j$ is some $\omega^*_n$. 

The input operands of the FFT, labeled $a_0...a_{n-1}$, are shown at the top and the output values, labeled $X(0)...X(n-1)$, are shown at the bottom. Unlike traditional FFT flowgraphs, we use $a_j$ instead of $x(j)$ for input operands and show data flow top-to-bottom instead of left-to-right to facilitate discussion relating this flowgraph to the polynomial evaluation perspective of the FFT.

Each node represents complex addition and/or multiplication operations applied to the input operands to generate the output values. Figure \ref{fig:node} shows the internal behavior of a node. For nodes with two input edges, the two input operands are added to produce the single complex result \emph{id}, when viewed concretely. We prefer to view \emph{id} symbolically, as a weighted sum of the $a_j$, where the weight on any $a_j$ is some $\omega^*_n$.  Next, \emph{id} is separately multiplied by two \textbf{twiddle factors} to produce the left and right output values. In Figure \ref{fig:node} the left and right twiddle factors are shown as $\omega^{ltfp}_n$ and $\omega^{rtfp}_n$ respectively but in the concise node representation as seen in Figure \ref{fig:radix2fft} only the integers $ltfp$ (left twiddle factor power) and $rtfp$ (right twiddle factor power) are shown in the bottom row of each node. 

\begin{figure}[t]
\centering
\begin{tikzpicture}[>=latex]

\node(nd) [rectangle split, rectangle split parts=3, rectangle split empty part height=1cm, rounded corners, text badly centered, minimum width=3.8cm, draw=black,inner sep=0.5mm]{ \nodepart{second} \nodepart{third} };

\node(add) [circle,draw,very thick,inner sep=2mm,above=-2mm] at (nd.center) {};
\draw[-,very thick] (add.west) -- (add.east);
\draw[-,very thick] (add.north) -- (add.south);

\node(addid) [rectangle,below=2.5mm] at (nd.center) {};
\node(addidlbl) [rectangle,right] at (addid) {\footnotesize\emph{id}};

\node(lmul) [circle,draw,very thick,above=3mm,inner sep=2mm] at (nd.255) {};
\draw[-,very thick] (lmul.north west) -- (lmul.south east);
\draw[-,very thick] (lmul.north east) -- (lmul.south west);

\node(rmul) [circle,draw,very thick,above=3mm,inner sep=2mm] at (nd.285) {};
\draw[-,very thick] (rmul.north west) -- (rmul.south east);
\draw[-,very thick] (rmul.north east) -- (rmul.south west);

\draw[->] (lmul.south) -- (nd.250);
\draw[->]  (rmul.south) -- (nd.290);

\node(ltf) [rectangle,left=2mm,anchor=east,inner sep=0mm] at (lmul.west) {$\omega^{ltfp}_n$};
\node(rtf) [rectangle,right=2mm,anchor=west,inner sep=0mm] at (rmul.east) {$\omega^{rtfp}_n$};

\draw[->] (ltf.east) -- (lmul.west);
\draw[->] (rtf.west) -- (rmul.east);

\draw[-] (add.south) -- (addid.center);
\draw[->] (addid.center) -- (lmul.north);
\draw[->] (addid.center) -- (rmul.north);

\draw[->] (nd.110) -- (add.110);
\draw[->] (nd.70) -- (add.70);

\node(linp) [rectangle,anchor=south,above=5mm] at (nd.north west) {};
\node(rinp) [rectangle,anchor=south,above=5mm] at (nd.north east) {};

\node(lout) [rectangle,anchor=north,below=5mm] at (nd.south west) {};
\node(rout) [rectangle,anchor=north,below=5mm] at (nd.south east) {};

\draw[->] (linp) -- (nd.110);
\draw[->] (rinp) -- (nd.70);

\draw[->] (nd.250) -- (lout);
\draw[->] (nd.290) -- (rout);

\node(inplbl) [rectangle,above=3mm] at (nd.north) {\footnotesize\sl{Input}};
\node(outlbl) [rectangle,below=3mm] at (nd.south) {\footnotesize\sl{Output}};

\end{tikzpicture}
\caption{Node Internal Behavior}
\label{fig:node}
\end{figure}
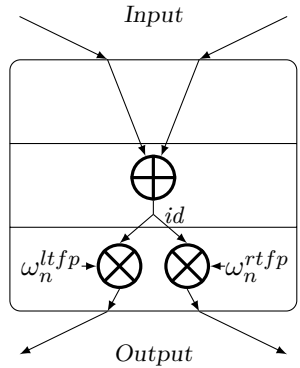

We count the cost of multiplication by some twiddle factor $\omega^{tfp}_n$ in the traditional way, where the cost of multiplication by $1$,$-1$,$i$ or $-i$ is free, multiplication by $\sqrt{i}$, $-\sqrt{i}$, $\sqrt{-i}$ or $-\sqrt{-i}$ is 4 floating point operations (FLOPs), and multiplication by any other $n^{th}$ root of unity is 6 FLOPs. In addition to the potential multiplication cost, each node always requires 2 FLOPs for the cost of the addition.  

There is one interesting multiplication cost exception when $\omega_n^{ltfp}$ and $\omega_n^{rtfp}$ are complex conjugates. In this case, $\Re{(\omega_n^{ltfp})} = \Im{(\omega_n^{rtfp})}$ and $\Im{(\omega_n^{ltfp})} = \Re{(\omega_n^{rtfp})}$ so that only 4 real multiplications and 4 real additions are needed to weight by both $\omega_n^{ltfp}$ and $\omega_n^{rtfp}$. In this case, we tally 6 FLOPs for the weight by $\omega_n^{ltfp}$ and only an additional 2 FLOPs for the weight by $\omega_n^{rtfp}$. For cases where $ltfp=rtfp$, we tally FLOPs for $\omega_n^{ltfp}$ only.   

Two separate multiplications by $\omega^{ltfp}_n$ and $\omega^{rtfp}_n$ are never seen in traditional FFTs as it typically leads to higher cost when counting total floating point operations. Instead, one multiplication is done and the result may or may not be negated at no additional cost to generate the second output. In this paper, we adopt the more general description containing two separate multiplications and will later show how constraints can be applied to prune the search space to solutions requiring only a single multiplication without detriment to the final global FLOP count.

\begin{figure}[t]
\centering
\tikzstyle{fftnode}=[rectangle split, rectangle split parts=3, rounded corners, text badly centered, text width=8mm, minimum width=8.1mm, draw=black,inner sep=0.5mm]
\tikzstyle{fftterminal}=[rectangle, text centered]
\tikzstyle{arc}=[->]
\tikzstyle{fftgraph}=[>=latex]
\tikzstyle{fftmatrix}=[row sep=10mm, column sep={13mm,between origins}]
\tikzstyle{background}=[rectangle,rounded corners,fill=gray!25,inner sep=1.0mm]
\tikzstyle{polytext}=[anchor=120,inner sep=0mm]
\tikzstyle{costtext}=[anchor=220,inner sep=0mm]
\tikzstyle{stride}=[inner sep=1.0mm]
\input{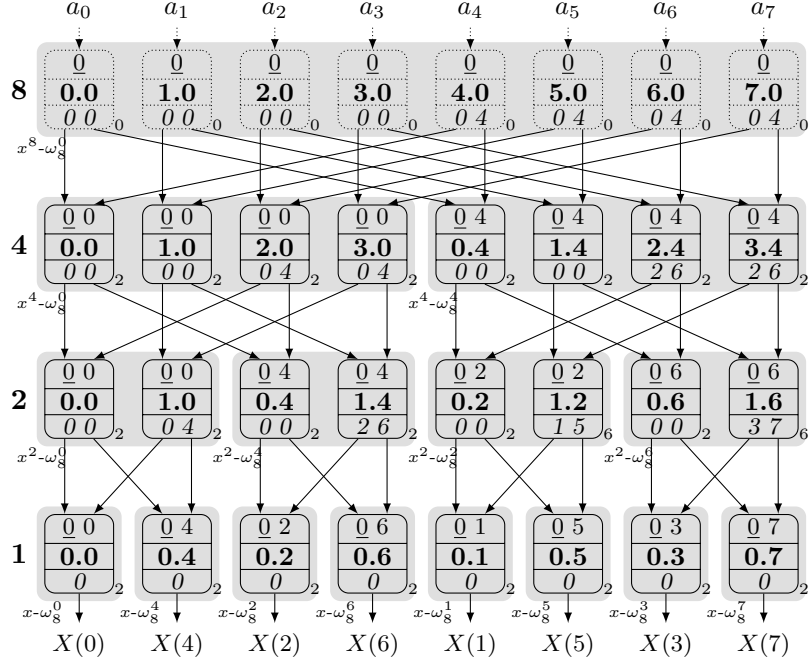}
\caption{Size-8 Radix-2 DIF FFT Flowgraph}
\label{fig:radix2fft}
\end{figure}

Dotted nodes in the top row of Figure \ref{fig:radix2fft} only have a single input operand and consequently there is no internal addition. In this case, the input operand is used directly as operand \emph{id} within the node. Nodes in the bottom row of Figure \ref{fig:radix2fft} only have a single output value and consequently there is just a single internal multiplication. In this case, only a single twiddle factor is specified. Again, traditional FFTs often suppress this final multiplication as it is typically a cost-free multiplication by 1 or -1. For the generality of our first formulation, we always include this final multiplication, but will later prove via SMT that it is not required when searching for lowest FLOP count FFTs.  

For the class of FFT flowgraphs we are considering, each node has at most two parents and two children. We adopt {\bf{dot notation}} when it is necessary to refer to attributes of a node's parents or children. We refer to a node's left or right parent as \emph{nd.lp} or \emph{nd.rp}, respectively, for a node \emph{nd}. Likewise, \emph{nd.lc} or \emph{nd.rc} refer to a node's left or right child, respectively. With this notation, a node's left parent's left parent twiddle factor can be referred to concisely as $nd.lp.lp.\omega^{tfp}_n$. Note that \emph{tfp} can be used here rather than \emph{ltfp} or \emph{rtfp} as it is clear from the graph context when tracing edges which twiddle factor applies.

\subsection{Flowgraph Properties}

Our flowgraph analysis requires that the following two properties be true, which are checked by computer traversal of the flowgraph.   

\begin{prop}
There is at most one path from any input operand $a_j$ or internal node $nd_p$ to any output value $X(k)$ or node $nd_q$. 
\label{prop:singlepath}
\end{prop}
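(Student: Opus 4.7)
The plan is to exploit the leveled butterfly structure shared by all of the power-of-two FFT flowgraphs in scope. First I would observe that each flowgraph is a leveled DAG whose $\log_2 n + 1$ levels are indexed by stride values $n, n/2, \ldots, 2, 1$, with exactly $n$ nodes at every level; every node at stride $s > 1$ has exactly two distinct children at stride $s/2$, and every node at stride $s < n$ has exactly two distinct parents at stride $2s$. Each $a_j$ feeds a unique top-level node and each $X(k)$ is produced by a unique bottom-level node, so paths involving inputs or outputs reduce, after one edge, to paths between internal nodes, and I can focus on the internal-to-internal case.

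The core of the proof is a lemma: for any internal node $nd_p$ at stride $s_p$ and any $k$ with $s_p/2^k \geq 1$, the $2^k$ downward walks from $nd_p$ of length $k$ (independently choosing a left or right outgoing edge at each step) terminate at $2^k$ \emph{distinct} nodes at stride $s_p/2^k$. Because there are only $2^k$ such walks in total, distinctness of their endpoints immediately forces at most one path from $nd_p$ to any specified node at that stride, which is precisely the claim.

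I would prove the lemma by induction on $k$. The base case $k=1$ is immediate because a node's two children are distinct by construction. For the inductive step it suffices to show that the descendant set of the left child of any node $nd$ is disjoint from the descendant set of the right child at every deeper stride. Here I would invoke the $(stride, base, W_s)$ labeling from Section 1.1: once this triple is shown to be well-defined and preserved by the local edge rules of each listed FFT family, the two children of any node must disagree in at least one of the $base$ or $W_s$ coordinates, and the disagreement propagates to all deeper descendants because each deeper node inherits its base and $W_s$ from its two parents via rule-determined combinations that are injective on the sibling partition. Consequently, no descendant of the left child can equal any descendant of the right child, closing the induction.

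The main obstacle is that the scope of the paper covers several FFT families (radix-2, radix-4, decimation-in-time and decimation-in-frequency split-radix, conjugate split-radix, and their twisted variants), so a case-by-case verification would be tedious and unilluminating. I would therefore factor the proof through the three structural ingredients it actually uses: (a) the leveled butterfly geometry with constant row width $n$, (b) distinctness of the two children of every internal node, and (c) preservation of a $(base, W_s)$-style invariant that separates siblings at every stride. These are purely structural conditions on the flowgraph and can be certified mechanically on any instance, matching the authors' remark that the property is checked by a computer traversal of the flowgraph.
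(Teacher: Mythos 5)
First, a point of calibration: the paper itself offers no proof of this statement. Property~\ref{prop:singlepath} is presented as a scoping assumption on the class of flowgraphs considered --- ``checked by computer traversal of the flowgraph'' --- and the authors explicitly say it is not the point of the paper to prove which FFT algorithms generate flowgraphs with this property. So your proposal is attempting something the paper deliberately does not do, and it has to be judged on its own merits. Your overall skeleton is sound: reducing the claim to the statement that, for every internal node, the descendant sets of its two children are disjoint at every deeper stride is exactly the right reformulation (two distinct paths would have to diverge at some node and reconverge, which is precisely a shared descendant of its two children).

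The genuine gap is in how you discharge that key step. You justify sibling separation and its propagation by appeal to the $(stride, base, W_s)$ labels, asserting that the labels are ``preserved by the local edge rules'' and that inheritance is ``injective on the sibling partition.'' Neither is proved, and in the paper's development this route is circular: the invariance of $W_s$ (Theorem~\ref{thm:weightstride}) is proved \emph{using} Property~\ref{prop:singlepath}, and the well-definedness of the $base$/$stride$/$W_b$ node labels rests on the single-contribution weighted-sum view that Properties~\ref{prop:singlepath} and~\ref{prop:interleave} exist to license. To make your induction non-circular you would have to construct the labels directly from each generating algorithm's recursion --- in effect, identify each node with a coefficient of a vertex in the factor lattice of $x^n-1$ (Section~\ref{sec:polyview}), where disjointness of the two children's descendant cones follows because the child factors $x^s-\omega_n^{t_1}$ and $x^s-\omega_n^{t_2}$ are coprime and hence have disjoint sets of divisors in the lattice, with twisted variants handled as graph isomorphisms. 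That is exactly the per-algorithm case analysis you set aside as tedious, and your closing retreat to structural conditions ``certified mechanically on any instance'' concedes the point: your condition (c) is essentially as strong as the proposition being proved, and a mechanical per-instance check is precisely what the paper already does.
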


\begin{definition}
 The subset of input operands $a_j$ that can reach a node contains that node's \textbf{original ancestors} and is denoted as $nd.A$ for a node $nd$. The subset of output values $X(k)$ that is reachable from a node contains that node's \textbf{terminal descendants} and is denoted as $nd.D$ for a node $nd$.
\label{defn:reachable} 
\end{definition}

\begin{prop}
For any node $nd$ in the flowgraph, when the elements of $nd.A$ are ordered such that indices are strictly increasing, the difference (mod $n$) of indices on successive original ancestors in the list is constant. Furthermore, original ancestors of a node's left parent interleave precisely with the original ancestors of the right parent, and the sets are always disjoint.
\label{prop:interleave}
\end{prop}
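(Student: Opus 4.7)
The plan is to prove both claims of Property 2.2 simultaneously by induction on the depth $d$ of the node $nd$ measured from the input (dotted) row of the flowgraph. Throughout I will use Property 2.1, which guarantees that the subgraph of ancestors of any node is a tree, so that whenever $nd$ has two parents we automatically have $nd.A = nd.lp.A \sqcup nd.rp.A$ and the disjointness clause is free.

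For the base case $d = 0$, each dotted input node satisfies $nd.A = \{a_j\}$ for a single $j$. The arithmetic-progression condition holds vacuously on a singleton, and by Definition 2.2 the stride is taken to be $n$. There are no parents, so the interleave clause is void.

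For the inductive step, suppose the claim holds below depth $d$ and let $nd$ be a two-parent node at depth $d$ (the single-parent case is immediate, since $nd.A = nd.lp.A$). By induction, written with indices strictly increasing mod $n$, we have $nd.lp.A = \{b_L,\, b_L + s_L,\, b_L + 2s_L, \ldots\}$ and $nd.rp.A = \{b_R,\, b_R + s_R,\, b_R + 2s_R, \ldots\}$. What remains to establish is: (i) $s_L = s_R =: s$, and (ii) after possibly swapping left and right, $b_R \equiv b_L + s/2 \pmod{n}$. Given (i) and (ii), merging the two progressions yields an arithmetic progression with first index $b_L$ and stride $s/2$, which is the desired conclusion at $nd$.

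The main obstacle is establishing (i) and (ii) purely from local information; they do not follow from Property 2.1 alone but reflect the specific recursive structure of the common power-of-two FFTs under consideration. I would couple the induction with the polynomial-remainder viewpoint from Section 2.1: to each node of stride $s$ attach the factor $x^{n/s} - \omega_n^{c}$ whose remainder it represents, and observe that a two-parent merge corresponds to combining the remainders of $f$ modulo complementary factors $x^{n/(2s)} - \alpha$ and $x^{n/(2s)} + \alpha$ into a single remainder modulo $x^{n/s} - \alpha^2$. The ancestor sets of such complementary-factor parents are automatically two stride-$2s$ progressions whose bases are offset by $s$, interleaving to a stride-$s$ progression at $nd$. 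For flowgraph families that fall outside this algebraic template, (i) and (ii) would instead be certified exactly as the authors propose, by a mechanical traversal that checks the two conditions at every node.
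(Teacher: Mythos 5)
First, be aware of what the paper actually does with this statement: it is stated as a \emph{Property}, not a theorem. The authors check it ``by computer traversal of the flowgraph'' for each flowgraph they build, and explicitly say it is not the point of the paper to prove which FFT algorithms generate which flowgraphs. So your attempt at a genuine inductive proof goes beyond the paper, and your fallback (``certified \ldots by a mechanical traversal'') is in fact the authors' entire treatment. Your skeleton is also fine as far as it goes: disjointness does follow from the single-path property (Property 2.1), and the induction correctly isolates the two facts that must be supplied, namely equal parental strides and a base offset of half the parental stride.

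The gap is in the algebraic step you invoke to supply those two facts. In these flowgraphs the factor attached to a stride-$s$ node has degree $s$, not $n/s$ (Section 2.6: ``the degree of each polynomial factor is the stride for all flowgraph nodes it contains''), and, more importantly, a node's two flowgraph parents are \emph{not} the remainders of $f$ modulo complementary factors $x^{m}-\alpha$ and $x^{m}+\alpha$. That pairing describes CRT reconstruction, i.e.\ going back up the factor lattice, and there the two inputs have \emph{identical} original-ancestor sets: for $n=8$, the constant coefficients $a_0+a_2+a_4+a_6$ of $f \bmod (x^2-1)$ and $a_0-a_2+a_4-a_6$ of $f \bmod (x^2+1)$ both have ancestor set $\{a_0,a_2,a_4,a_6\}$, so under your description the parents' ancestor sets coincide rather than interleave (and there would be two paths from $a_0$ to the node, contradicting Property 2.1). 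The correct correspondence is the opposite one: a stride-$s$ node is the coefficient of $x^{j}$ in $f \bmod (x^{s}-\beta)$, and its two flowgraph parents are the coefficients of $x^{j}$ and $x^{j+s}$ of the \emph{single} remainder $f \bmod (x^{2s}-\beta^{2})$ one level up the lattice; their ancestor sets are then $\{a_j,a_{j+2s},\dots\}$ and $\{a_{j+s},a_{j+3s},\dots\}$, two stride-$2s$ progressions offset by $s$, which is exactly your (i) and (ii). With that repair (and the paper's twisting remark, which preserves the stride structure, to cover twisted/conjugate variants), your induction goes through for the classical lattice; as written, however, the key step appeals to a configuration in which the claimed interleaving is false.
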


\begin{example}
For the node at the end of the third row in Figure \ref{fig:radix2fft} labeled with a bold 1.6,
\begin{equation*}
 nd.A = \{a_1,a_3,a_5,a_7\}
\end{equation*}
when ordered with strictly increasing indices. The integer difference (mod $n$) of successive original ancestor indices is always 2 for this example. For the node's left and right parents, 
\begin{align*}
 nd.A &= nd.lp.A \cup nd.rp.A \\
      &= \{a_1,a_5\} \cup \{a_3,a_7\} \\
      &= \{a_1,a_3,a_5,a_7\},
\end{align*}
which interleave precisely when combined.
\label{exmp:interleave}
\end{example}

Flowgraphs adhering to these two properties are expected given the divide-and-conquer nature of common power-of-two FFTs. We have built flowgraphs of various size-$n$ for radix-2, radix-4, decimation-in-time and decimation-in-frequency split-radix, conjugate split-radix \cite{johnson2007modified} as well as twisted\cite{bernstein2007tangent} FFTs, and have always found these properties to be true. For FFTs with some radix-4 content, this requires that when adding four numbers, the addition is factored into a binary addition tree that observes Property \ref{prop:interleave}, which is what is commonly done. For twisted FFTs, different twisting functions $\zeta$ lead to different permutations of $X(k)$, but these are isomorphisms of the same flowgraph structure. It is not the point of this paper to prove which FFT algorithms generate which flowgraphs. Instead, we observe that many common power-of-two FFT algorithms generate flowgraphs that have these properties, and we require adherence to develop our flowgraph-based ideas. 

\subsection{A Node's \emph{base} and \emph{stride}}

In the flowgraph shown in Figure \ref{fig:radix2fft}, the left number in the middle row of each node is the $base$ index for that node's $id$ and the number at the left of an entire row of nodes is the $stride$ for any node's $id$, when $id$ is viewed symbolically as a weighted sum. Figure \ref{fig:flowgraphkey} is a key for all flowgraph labels and Figure \ref{fig:node} identifies the internal edge $id$.

\begin{figure}[hb]
\centering
\begin{tikzpicture}[>=latex]
\node(nd) [rectangle split, rectangle split parts=3, rectangle split empty part height=1cm, rounded corners, text badly centered, minimum width=4.0cm, draw=black,inner sep=0.5mm]{};

\node(astride) [rectangle,left] at (nd.west) {\emph{stride}};
\node(flopcount) [rectangle,anchor=north west,inner sep=0mm] at (nd.south east) {\footnotesize\emph{FLOPs}};

\node(base) [rectangle,right=3mm] at (nd.west) {\emph{base}};
\node(wstride) [rectangle,left=3mm] at (nd.east) {$W_{s}$};
\node(lwbp) [rectangle,right=3mm] at (nd.text west) {$W_{b}$};
\node(rwbp) [rectangle,left=3mm] at (nd.text east) {$rW_{b}$};
\node(ltfp) [rectangle,right=3mm] at (nd.third west) {\emph{ltfp}};
\node(rtfp) [rectangle,left=3mm] at (nd.third east) {\emph{rtfp}};

\end{tikzpicture}
\caption{Flow Graph Node Key}
\label{fig:flowgraphkey}
\end{figure}

\begin{definition}
A \textbf{node's \emph{base} label}, $nd.base$, is always the index of the $base$, as defined in Definition \ref{defn:base}, for the weighted sum $nd.id$ represented by the node. The number $nd.base$ is the minimum of $nd.lp.base$ or $nd.rp.base$. For the terminal case when $nd$ has a single input operand, $nd.base$ is equal to $j$ for the given input $a_j$.
\label{defn:graphbase} 
\end{definition}

To facilitate computation of $nd.base$, later computation of \emph{weight on base}, as well as impose regularity on the flowgraph, the following property is enforced in flowgraph diagrams and computer data structures.

\begin{prop}
For any node $nd$ in a flowgraph, the relation $(nd.lp.base<nd.rp.base)$ is always true. 
\label{prop:orderedbases}
\end{prop}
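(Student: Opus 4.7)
The plan is to derive the claim as a structural consequence of Property \ref{prop:interleave} together with the canonical construction of FFT flowgraphs used throughout the paper, rather than as a labeling convention imposed by hand after the fact. The key observation is that Property \ref{prop:interleave} already forces the ancestor sets $nd.lp.A$ and $nd.rp.A$ to split $nd.A$ in exactly two possible ways, so the only remaining question is which of those two ways the canonical construction chooses; the proof then shows by induction that it always chooses the one with the smaller base on the left.

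First, I would apply Property \ref{prop:interleave} to $nd$. Writing $nd.A = \{a_b, a_{b+s}, a_{b+2s}, \ldots, a_{b+(2k-1)s}\}$ (indices mod $n$) with $b = nd.base$ and $s = nd.stride$, the interleaving and disjointness conditions force one parent's ancestor set to be exactly the even-indexed subsequence $\{a_{b+2js}\}$ and the other's to be exactly the odd-indexed subsequence $\{a_{b+(2j+1)s}\}$. Hence $\{nd.lp.base, nd.rp.base\} = \{b, b+s\}$; in particular the two bases are distinct and differ by exactly $s$. So the claim reduces to proving that $nd.lp.A$ is always the even-indexed half, i.e.\ the half containing $a_b = a_{nd.base}$.

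Second, I would establish this by induction on flowgraph depth, working from the input row downward. The dotted input-row nodes have only a single input operand, so the base case is vacuous. For an internal node $nd$, each FFT algorithm in scope (radix-2, radix-4, decimation-in-time and decimation-in-frequency split-radix, conjugate split-radix, and twisted variants) constructs $nd$ by combining two sub-results from the row above that correspond to a polynomial factorization of the form $x^{2s} - \omega_n^{2p} = (x^s - \omega_n^p)(x^s + \omega_n^p)$. In every such decomposition, the sub-result whose ancestor set still contains $a_{nd.base}$ is precisely the one drawn on the left in the paper's flowgraph figures and stored as $nd.lp$ in the computer data structures. Applying the inductive hypothesis to that sub-result yields $nd.lp.base = b < b + s = nd.rp.base$.

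The main obstacle is making the ``canonical construction'' precise uniformly across the full class of FFTs in scope. For pure radix-2 DIF it is immediate from the split $f(x) = f_e(x^2) + x\,f_o(x^2)$, where $f_e$ contains $a_0, a_2, \ldots$ and is placed on the left. For split-radix and conjugate split-radix, the size-$n$ problem is decomposed into one size-$n/2$ and two size-$n/4$ subproblems, and one must additionally verify that the further decomposition of the ``odd'' branch into its two size-$n/4$ pieces also respects the smaller-base-on-the-left rule. This reduces to a small finite case analysis per algorithm, and in each case it matches the computer-verified flowgraph construction referenced just before Property \ref{prop:interleave}; once this check is done algorithm by algorithm, the induction goes through and $nd.lp.base < nd.rp.base$ holds at every node of every flowgraph considered.
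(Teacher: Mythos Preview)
The paper does not prove this statement at all. Look at the sentence immediately preceding Property~\ref{prop:orderedbases}: ``the following property is \emph{enforced} in flowgraph diagrams and computer data structures.'' In the paper's framework this is not a theorem derived from Properties~\ref{prop:singlepath} and~\ref{prop:interleave}; it is a normalization convention imposed on the representation so that later definitions (in particular Definition~\ref{defn:graphweightonbase}) come out cleanly. There is no proof to compare against.

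Your first step is genuinely useful and correct: Property~\ref{prop:interleave} does force $\{nd.lp.base,\,nd.rp.base\}=\{b,\,b+s\}$, so the two parent bases are distinct and differ by exactly the stride. That is a nice observation the paper leaves implicit. But your second step does not succeed as a proof. The labels ``left parent'' and ``right parent'' are not intrinsic to the underlying mathematics of the FFT decomposition; they are attributes of the drawn diagram and of the data structure. When you write that the sub-result containing $a_{nd.base}$ ``is precisely the one drawn on the left in the paper's flowgraph figures and stored as $nd.lp$ in the computer data structures,'' you are not deriving the claim---you are restating the very convention the paper says it enforces. The inductive argument therefore bottoms out in an appeal to the same labeling choice, not in anything forced by the algorithms themselves. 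A clean way to present this would be to keep your first step as a lemma (the two parent bases are $b$ and $b+s$) and then simply state, as the paper does, that the flowgraph representation is normalized so that the smaller one is always the left parent.
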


\begin{definition}
A \textbf{node's \emph{stride} label}, $nd.stride$ for a node $nd$, is always the $stride$, as defined in Definition \ref{defn:stride}, for the weighted sum $nd.id$ represented by the node. The number $nd.stride$ is the absolute difference (mod $n$) of $nd.lp.base$ and $nd.rp.base$. For the terminal case when $nd$ has a single input operand, $nd.stride$ is defined to be $n$ for a size-$n$ FFT. Property \ref{prop:interleave} ensures that strides are constant and hence a single stride label per node is sufficient.
\label{defn:graphstride}
\end{definition}

\begin{example}
From Example \ref{exmp:interleave}, we know that for the last node in the third row of Figure \ref{fig:radix2fft},
\begin{equation*}
 nd.A = \{a_1,a_3,a_5,a_7\}.
\end{equation*}
Ignoring values of applied weights in the flowgraph, the polynomial coefficient represented by this node must be of the form
\begin{equation*}
 nd.id = a_1\omega^*_8+a_3\omega^*_8+a_5\omega^*_8+a_7\omega^*_8.
\end{equation*}
From the discussion in Section \ref{sec:definitions} we can deduce that $nd.base=1$ and $nd.stride=2$. Also, we see that $nd.lp.base=1$ and $nd.rp.base=3$. By Definition \ref{defn:graphbase}, 
\begin{align*}
 nd.base &= \min\{nd.lp.base,nd.rp.base\}\\
	 &= \min\{1,3\}\\
         &= 1,
\end{align*}
and by Definition \ref{defn:graphstride},
\begin{align*}
 nd.stride &= nd.rp.base - nd.lp.base \text{ (mod }n\text{)}\\
	 &= 3 - 1 \text{ (mod }8\text{)}\\
         &= 2.
\end{align*}
This node's row in the flowgraph is labeled with 2, the stride. The first label in the middle row of the node itself is 1, the base.  
\end{example}

\subsection{Weight on \emph{base}}

The \emph{weight on base} for every node's input edge, as well as that node's weighted sum $id$, is recorded in the flowgraph. Even though $W_b$ is defined for a true polynomial coefficient in Definition \ref{defn:weightonbase}, we record the \emph{weight on base} for both the left and right input edge \emph{before} the addition since both are required later to determine a node's weight stride. As shown in Figure \ref{fig:flowgraphkey}, the top row of each node specifies $W_b$, the integer power (mod $n$) to which $\omega_n$ has been raised to form the accumulated weight on the $base$ of the weighted sum of $a_j$ represented by the left input edge. Likewise, $rW_b$ represents the same for the right input edge. From Property \ref{prop:orderedbases}, we know that after the addition the $base$ of $nd.id$ is the same as the $base$ of the left parent and that the addition does not alter weights. Thus, $W_b$ for $nd.id$ is equal to the \emph{weight on base} of the left input edge and there is no need for a separate $lW_b$.

\begin{figure}[t]
\centering
\tikzstyle{fftnode}=[rectangle split, rectangle split parts=3, rounded corners, text badly centered, text width=6mm, minimum width=6mm, draw=black,inner sep=0.35mm]
\tikzstyle{fftterminal}=[rectangle, text centered]
\tikzstyle{arc}=[->]
\tikzstyle{tarc}=[->,very thick]
\tikzstyle{fftgraph}=[>=latex]
\tikzstyle{fftmatrix}=[row sep=9mm, column sep={9mm,between origins}]
\tikzstyle{background}=[rectangle,rounded corners,fill=gray!25,inner sep=0.0mm]
\tikzstyle{polytext}=[anchor=65,inner sep=0mm]
\tikzstyle{costtext}=[inner sep=1.0mm]
\tikzstyle{stride}=[inner sep=1.0mm]
\input{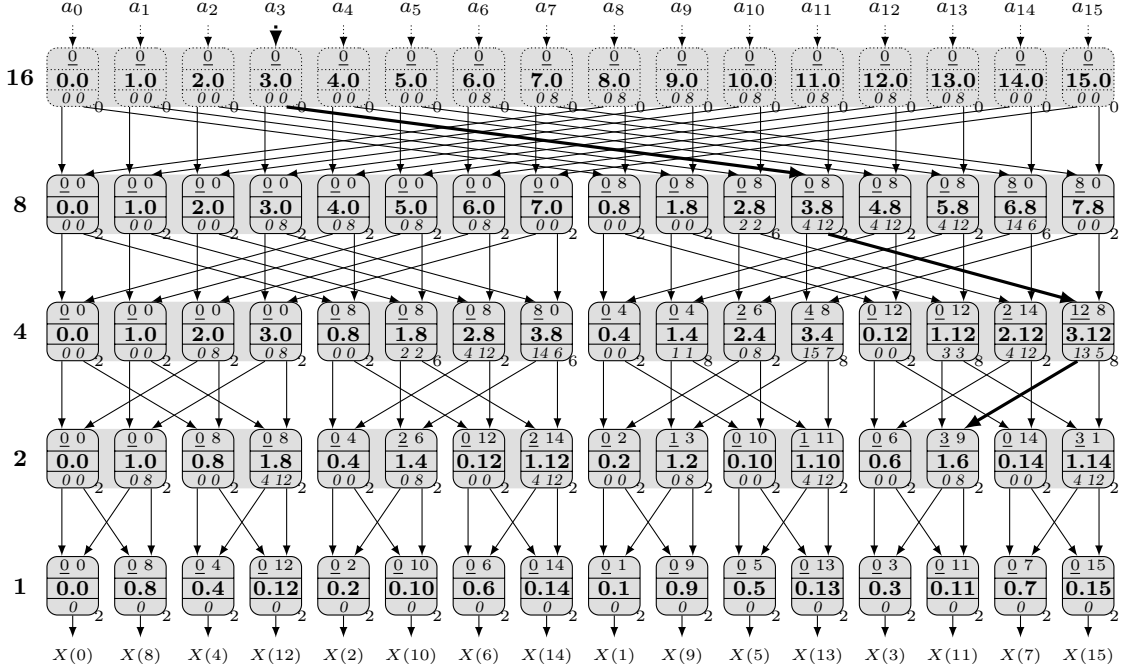}
\caption{Size-16 Conjugate Split-Radix FFT}
\label{fig:conjugatesplitradixfft}
\end{figure}

\begin{definition}
The \textbf{\emph{weight on base}} of a node's left input edge is,
\begin{equation*}
nd.\omega^{W_b}_n=(nd.lparent.\omega^{tfp}_n)(nd.lparent.\omega^{W_b}_n),
\end{equation*}
and likewise for a node $nd$'s right input edge is,
\begin{equation*}
nd.\omega^{rW_b}_n=(nd.rparent.\omega^{tfp}_n)(nd.rparent.\omega^{W_b}_n).
\end{equation*}
Following from Definition \ref{defn:graphbase} and Property \ref{prop:orderedbases}, the \emph{weight on base} of $nd.id$ is equal to the \emph{weight on base} of the left input edge and both are referenced as $W_b$. For the terminal case when $nd$ has a single input, $W_b$ is defined to be zero. Finally, note that $W_b$ for all output values $X(k)$ is always zero as all $X(k)$ contain a constant term with $a_0$ that can only be weighted by $\omega^0_n$ in any correct DFT.
\label{defn:graphweightonbase} 
\end{definition}

\begin{example}
 Since \emph{weight on base} is the result of a series of multiplications by various roots of unity $\omega^*_n$, it can also be viewed as addition (mod $n$) of the powers on the roots of unity. This is illustrated by the path shown with bold edges from input operand $a_3$ to a node $nd$ with $nd.base=1$ and $nd.stride=2$ in Figure \ref{fig:conjugatesplitradixfft}. Then
\begin{equation*}
 nd.\omega^{rW_b}_{16}=\omega^{13}_{16}(\omega^{12}_{16}(\omega^0_{16}(a_3))),
\end{equation*}
which is $a_3$ multiplied by all twiddle factors along the path, and can be rewritten as
\begin{align*}
 rW_b &= 13 + 12 + 0 \text{ (mod }16\text{)}\\
	   &= 9.
\end{align*}
This $rW_b$, 9, is shown in the upper right corner of the node. Once this path reaches $nd$, we no longer keep track of the weight on $a_3$ as it is no longer the $base$ of the weighted sum $id$. However, it is still essential to keep track of this weight up to this point as it is used to compute $W_s$. 
\end{example}

\subsection{Weight Stride}

A node's \emph{weight stride} label, $W_s$, is shown at the right of each node's middle row, as seen in Figures \ref{fig:radix2fft}, \ref{fig:flowgraphkey} and \ref{fig:conjugatesplitradixfft}.

\begin{definition}
A node's \emph{weight stride} label is
\begin{equation*}
  nd.W_s = nd.rW_b - nd.W_b \text{ (mod }n\text{)}.
\end{equation*}
For the terminal case when $nd$ has a single input, $W_s$ is defined to be zero. The number $nd.W_s$ is always the $W_s$ as defined in Definition \ref{defn:weightstride} for the weighted sum $nd.id$ represented by the node.
\label{defin:graphweightstride}
\end{definition}

\begin{example}
Again consider the node $nd$ at the end of the bold path in Figure \ref{fig:conjugatesplitradixfft} where

\begin{align*}
 nd.W_s &= nd.rW_b - nd.W_b \text{ (mod }n\text{)}\\
 &= 9 - 3 \text{ (mod }16\text{)}\\
		      &= 6.
\end{align*}
This $W_s$, 6, appears as the last label in the middle row of this node. We can now reconstruct exactly the weighted sum of coefficient $nd.id$. For the node we are considering with $stride=2$, $base=1$, $W_s=6$ and $W_b=3$,
\begin{equation*}
 nd.id = a_1\omega^3_{16} + a_3\omega^9_{16} + a_5\omega^{15}_{16} + a_7\omega^{5}_{16} + a_9\omega^{11}_{16} + a_{11}\omega^{1}_{16} + a_{13}\omega^{7}_{16} + a_{15}\omega^{13}_{16}
\end{equation*}
\end{example}

Now that a node's $W_s$ is defined, we present a key observation that $W_s$ is invariant across all FFTs that can be mapped to the given flowgraph. This invariance is central in defining a family of FFTs that can then be searched for desirable members.

\begin{theorem}
 For a size-$n$ FFT flowgraph constructed by \emph{any} FFT algorithm such that Properties \ref{prop:singlepath} and \ref{prop:interleave} hold, every node's $W_s$ is an invariant. 
\label{thm:weightstride}
\end{theorem}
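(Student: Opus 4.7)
The plan is to derive a closed-form expression for $nd.W_s$ that depends only on structural data of the flowgraph (ancestors, descendants, stride) and not on which twiddle factors the particular FFT happens to use; invariance across algorithms then follows immediately.

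First I would exploit Property \ref{prop:singlepath} to decompose weights. For any $a_j \in nd.A$ and any $X(k) \in nd.D$ there is a unique directed path through $nd$, which I split into an upper segment $a_j \to nd$ and a lower segment $nd \to X(k)$. Writing $Q(a_j, nd)$ and $R(nd, X(k))$ for the sums (mod $n$) of twiddle factor powers along these two segments, correctness of the FFT---the fact that the assembled weight on $a_j$ in $X(k)$ must be $\omega_n^{jk}$---forces
\begin{equation*}
  Q(a_j, nd) + R(nd, X(k)) \equiv jk \pmod{n}. \qquad (\ast)
\end{equation*}
By Definition \ref{defn:graphweightonbase}, Property \ref{prop:orderedbases}, and the interleaving part of Property \ref{prop:interleave}, the left parent's base is $nd.base$ and the right parent's base is $nd.base + nd.stride$, and the additions inside every node preserve weights. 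Consequently $nd.W_b = Q(a_{nd.base}, nd)$ and $nd.rW_b = Q(a_{nd.base+nd.stride}, nd)$. Fixing any single $X(k) \in nd.D$ and applying $(\ast)$ to the two inputs $a_{nd.base}$ and $a_{nd.base+nd.stride}$, then subtracting to cancel the common $R(nd, X(k))$ term, yields
\begin{equation*}
  nd.W_s \;=\; nd.rW_b - nd.W_b \;\equiv\; nd.stride \cdot k \pmod{n}.
\end{equation*}
The right-hand side mentions only $nd.stride$ and some $k \in nd.D$---flowgraph data only---so $nd.W_s$ is the same for every FFT realizable on the given flowgraph. The terminal case of a node with a single input is handled directly by the definition $W_s = 0$.

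The subtle step, which I expect to be the main obstacle, is showing the formula is well-defined: the right-hand side $nd.stride \cdot k$ apparently depends on which $k \in nd.D$ is picked, while the left-hand side does not. The easy direction is that the mere existence of any valid FFT on the flowgraph fixes $nd.rW_b - nd.W_b$ to a single value, forcing $nd.stride \cdot k$ to be constant (mod $n$) across $k \in nd.D$. A purely structural justification uses the divide-and-conquer symmetry: Property \ref{prop:interleave} implies $|nd.A| \cdot nd.stride \equiv 0 \pmod n$, and a descendants-side analogue---provable by induction down the flowgraph in the same spirit as Property \ref{prop:interleave}---shows that any two $k, k' \in nd.D$ differ by a multiple of $|nd.A|$, whence $nd.stride \cdot (k-k') \equiv 0 \pmod n$. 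This closes the argument and confirms that $W_s$ is an invariant of the flowgraph rather than of any particular algorithm realizing it.
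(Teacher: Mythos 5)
Your proposal is correct and follows essentially the same route as the paper's proof: fix an output $X(k)\in nd.D$, use Property \ref{prop:singlepath} to see that $a_{nd.base}$ and $a_{nd.base+nd.stride}$ reach $X(k)$ only through $nd$ and share the same downstream weighting after the addition, and conclude from DFT correctness that $nd.W_s \equiv k\cdot nd.stride \pmod n$, a quantity determined by the flowgraph alone. Your $Q/R$ path-splitting and the remark on well-definedness over different $k\in nd.D$ merely formalize steps the paper treats implicitly ("bound together by the addition\dots never weighted again individually"), so no substantive difference.
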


\begin{proof}
 Consider an arbitrary node $nd$ in the flowgraph. Next, consider an arbitrary FFT output value from this node's terminal descendants, $X(k) \in nd.D$. For the two input values $a_{(nd.base)}$ and $a_{(nd.base)+(nd.stride)}$ in the weighted sum $X(k)$, it follows from the definition of the DFT that these input values must be weighted as 
\begin{equation*}
a_{(nd.base)}\omega^{k(nd.base)\text{ (mod }n\text{)}}_n\text{ and }a_{(nd.base)+(nd.stride)}\omega^{k((nd.base)+(nd.stride))\text{ (mod }n\text{)}}_n.
\end{equation*}
Hence, the \emph{weight stride} between these input values is
\begin{align*}
 \text{\emph{weight stride}} &= k((nd.base)+(nd.stride)) - k(nd.base)\text{ (mod }n\text{)}\\
                             &= k(nd.stride) \text{ (mod }n\text{)}.
\end{align*}
Since, by Property \ref{prop:singlepath}, $nd$ is the only contributor of $a_{(nd.base)}$ and $a_{((nd.base) + (nd.stride))}$ to $X(k)$, and they are are bound together by the addition in $nd$ and never will be weighted again individually, we have
\begin{equation*}
 nd.W_s = k(nd.stride)\text{ (mod }n\text{)}.
\end{equation*}
Any other value for $nd.W_s$ would produce an incorrect FFT result.
\end{proof}

\begin{example}
Again consider the node at the end of the bold path in Figure \ref{fig:conjugatesplitradixfft} with $nd.W_s=6$, $nd.stride=2$ and $X(3),X(11) \in nd.D$. Then,
\begin{align*}
nd.W_s &= k(nd.stride)\text{ (mod }n\text{)} \\
 6 &= 3\times2 \text{ (mod }16\text{)}\\
   &= 11\times2 \text{ (mod }16\text{)}.
\end{align*}
\end{example}

\subsection{Canonical Node Labels}
Since the three node labels $base$, $stride$ and $W_s$ are either defined or proven to be unchanged by any applied weight $\omega_n$, we can now assign a canonical label to each node. For a size-$n$ FFT flowgraph, the set of canonical node labels defines a family of FFTs that can be realized by that flowgraph. Actual applied weights $\omega_n$, interpreted as $W_b$, distinguish members in the family of FFTs.

\begin{definition}
A node's \textbf{canonical label} is $nd(nd.stride,nd.base,nd.W_s)$.
\label{defn:canonicallabel}
\end{definition}

\begin{example}
Again consider the node at the end of the bold path in Figure \ref{fig:conjugatesplitradixfft}. This node is labeled $nd(2,1,6)$ and is the only node with that label in the flowgraph. The $nd.stride=2$ appears to the left of the row in which $nd(2,1,6)$ is found. The $nd.base=1$ and $nd.W_s=6$ appear in bold on the node itself.
\end{example}

\subsection{Correspondence to the Polynomial View}
\label{sec:polyview}

Although our main representation is a flowgraph, we have relied on the polynomial view of the FFT to facilitate our discussion. In particular, each edge of the flowgraph represents a weighted sum of $a_j$ and each $nd.id$ a coefficient of the original polynomial modulo one of its factors, also a weighted sum of $a_j$.  We highlight with gray background in Figures \ref{fig:radix2fft} and \ref{fig:conjugatesplitradixfft} the polynomial factor lattice as described by Bernstein in \cite{bernstein2007tangent} and shown in Figure \ref{fig:factortree}. The degree of each polynomial factor is the stride for all flowgraph nodes it contains. The power to which $\omega_n$ is raised to form the constant term in that polynomial is the $W_{stride}$ for all flowgraph nodes it contains. And finally, each node's $nd.base$ is the index of lowest degree among the $a_j$ used in the weighted sum represented by that node.

Recall our example for the case $n=8$. We associate the sampled data to coefficients of a degree 7 polynomial:
\begin{equation*}
 f(x) = a_0 + a_1x + a_2x^2 + a_3x^3 + a_4x^4 + a_5x^5 + a_6x^6 + a_7x^7.
\end{equation*}

This polynomial is an element of $\mathbb{C}[x]$, where we have a division algorithm that gives $f(x) \equiv r(x)\,\text{mod}\,p(x)$ when $ f(x)=q(x) p(x) + r(x)$, for $r(x)$ of lesser degree than $p(x)$. In particular, we have $f(x) \equiv f(x)\,\text{mod}(x^8 -1)$ because $f$ has degree strictly less than 8. The residue class of $f$ modulo $(x^8 -1)$ determines the residue class of $f$ modulo $(x^4 -1)$ and modulo $(x^4 +1)$, as the latter two polynomials are factors of $x^8-1$.  Given a complete factorization of $x^8-1$ into distinct, irreducible, linear factors as shown in Figure \ref{fig:factortree},

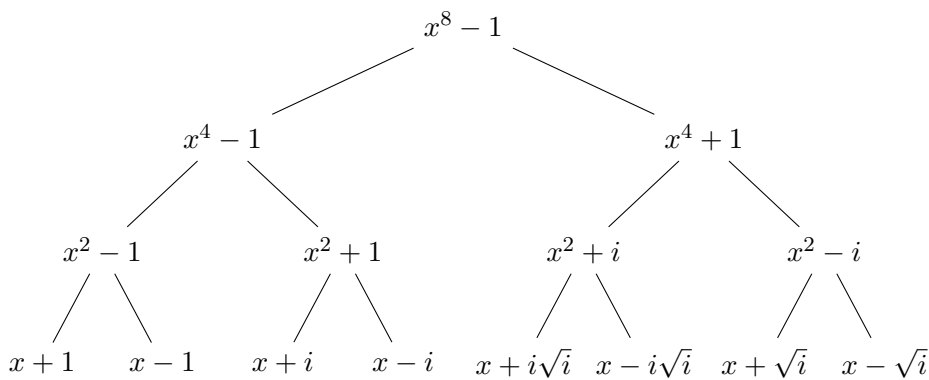
\begin{figure}[t]
    \centering
    \begin{tikzpicture}[
        level 1/.style={sibling distance=64mm},
        level 2/.style={sibling distance=32mm},
        level 3/.style={sibling distance=16mm},
        level 4/.style={sibling distance=8mm}  
    ]
    \node{$x^8-1$}
        child{node {$x^4-1$}
            child{node {$x^2-1$}
                child{node {$x+1$}
               }
                child{node {$x-1$}
               }
            }
            child{node {$x^2+1$}
                child{node {$x+i$}
               }
                child{node {$x-i$}
               }
            }           
        }
        child{node {$x^4+1$}
            child{node {$x^2+i$}
                child{node {$x+i\sqrt{i}$}
               }
                child{node {$x-i\sqrt{i}$}
               }
            }
            child{node {$x^2-i$}
                child{node {$x+\sqrt{i}$}
               }
                child{node {$x-\sqrt{i}$}
               } 
            }
        };
    \end{tikzpicture}
    \caption{Factor lattice of $x^8 -1$}
    \label{fig:factortree}
\end{figure}

we have the following isomorphism of rings:
\begin{equation*}
 \mathbb{C}[x]/ (x^8-1) \cong \mathbb{C}[x]/ (x-1)   \times \mathbb{C}[x]/ (x +1)   \times \mathbb{C}[x]/ (x +i)  \times  \dotsb \times \mathbb{C}[x]/ (x +\sqrt{i})  \times  \mathbb{C}[x]/ (x -\sqrt{i}),
\end{equation*}
which follows from the Chinese Remainder Theorem, as seen in \cite{AbstractAlgebra}.

So an FFT algorithm is finding an element from the product ring that corresponds to the given $f(x) \in \mathbb{C}/(x^8-1)$. Our flowgraph highlights the path of the inputs through the lattice of factor rings and canonical homomorphisms in Figure \ref{fig:factorrings}. The intermediate polynomials are 

\begin{align*}
 f(x)\,\text{mod}(x^4-1) &= (a_0+a_4) + (a_1+a_5)x +(a_2+a_6)x^2 +(a_3+a_7)x^3\\
 f(x)\,\text{mod}(x^4+1) &= (a_0-a_4) + (a_1-a_5)x +(a_2-a_6)x^2 +(a_3-a_7)x^3\\
 f(x)\,\text{mod}(x^2-1) &= (a_0+a_4+a_2+a_6) + (a_1+a_5+a_3+a_7)x\\
 f(x)\,\text{mod}(x^2+1) &= (a_0+a_4-a_2-a_6) + (a_1+a_5-a_3-a_7)x\\
 f(x)\,\text{mod}(x^2-i) &= (a_0-a_4)+(a_2-a_6)i + [(a_1-a_5)+(a_3-a_7)i]x\\
 f(x)\,\text{mod}(x^2+i) &= (a_0-a_4)-(a_2-a_6)i + [(a_1-a_5)-(a_3-a_7)i]x.
\end{align*}

Since finding the residue of $f(x)\,\text{mod}\,(x^4+1)$ is equivalent to setting $x^4$ equal to $-1 = \omega_8^4$ in $f(x)$, we see in the coefficients of $f(x)\,\text{mod}\,(x^4+1)$ pairs of the original inputs whose indices differ by 4.  Viewing these coefficients as weighted sums of the $a_j$, where the $a_j$ are written with the indices increasing, we note that successive weights change by a factor of $\omega_8^4 = -1$.   Since finding the residue of $f(x)\,\text{mod}\,(x^2+i)$ is equivalent to setting $x^2$ equal to $-i = \omega_8^2$ in $f(x)$, we see in the coefficients of $f(x)\,\text{mod}\,(x^2+i)$ four of the original inputs whose indices differ by 2 when listed in increasing order. Viewing these coefficients as weighted sums of the $a_j$, where the $a_j$ are written with the indices increasing, we note that successive weights change by a factor of $\omega_8^2 = -i$.

\begin{figure}[b]
   \centering
    \begin{tikzpicture}[
        level 1/.style={sibling distance=10mm},
        level 2/.style={sibling distance=10mm},
        level 3/.style={sibling distance=10mm},
        level 4/.style={sibling distance=10mm}
        ]
    \node {$f(x) \in \mathbb{C}/(x^8-1)$}
        child{node {$\mathbb{C}/ (x^4-1) \times \mathbb{C}/ (x^4+1)$} edge from parent [->]
           child{node {$\mathbb{C}/ (x^2-1) \times \mathbb{C}/ (x^2+1) \times \mathbb{C}/ (x^2+i) \times \mathbb{C}/ (x^2-i)$} edge from parent [->]
                 child{node {$\mathbb{C}/ (x-1) \times \mathbb{C}/ (x+1)  \times \dotsb \dotsb \times \mathbb{C}/ (x+\sqrt{i}) \times \mathbb{C}/ (x-\sqrt{i})$} edge from parent [->]}}
         };
    \end{tikzpicture}
    \caption{Factor rings with canonical homomorphisms}
   \label{fig:factorrings}
\end{figure}

\begin{example}
In Figure \ref{fig:radix2fft}, the $stride=4$ row has two polynomials highlighted, the left labeled $x^4_8-\omega^0_8$ and the right labeled $x^4_8-\omega^4_8$. The right polynomial, $x^4_8-\omega^4_8$, has four nodes corresponding to the four terms of this new polynomial. The constant term has $base=0$, the linear term has $base=1$, and so on, until the last node with $base=3$ represents the coefficient of the $x^3$ term in the polynomial. Since these four nodes arise from $x^4_8-\omega^4_8$, all nodes in this polynomial have $stride=4$. Finally, since the constant term of the factor polynomial is $-\omega^4_8$, written as $\omega_n$ raised to a power instead of the usual $+1$, all nodes in this polynomial have a $W_{stride}=4$.
\end{example}

This correspondence exists for the original FFT attributed to Gauss\cite{1866-gauss-3}. Twisting as described in \cite{bernstein2007tangent} implies that we use a different factor lattice for $x^8-1$. But it is essential to remember that our flowgraph analysis to derive canonical labels is independent of any twists and will derive the same canonical labels for a size-$n$ flowgraph regardless of what twists are applied in the particular FFT used to generate the flowgraph.

In the case of twisting $x^4+1$ to $x^4-1$ in a size-8 FFT, we see that 
\begin{align*}
 f(x) &\equiv r(x)\,\text{mod}(x^4+1)\\
 \implies f(x) &=q(x)(x^4+1) + r(x)\\
 \implies f(\omega_8x) &= q(\omega_8x)(\,(\omega_8x)^4+1)+r(\omega_8x)\\
   &=q(\omega_8x)(\,\omega_8^4x^4+1)+r(\omega_8x)\\
   &= (-1)q(\omega_8x)(x^4-1)+r(\omega_8x)\\
 \implies f(\dot{x}) &\equiv r(\dot{x})\,\text{mod}(\dot{x}^4-1),
\end{align*}
where $\dot{x}=\omega_8x$.  Taking the polynomial view, the element $\dot{x}^4-1 \in \mathbb{C}[\dot{x}]$ has a factor tree isomorphic to that of $x^4-1 \in \mathbb{C}[x]$. Whereas the factor ring $\mathbb{C}[x]/(x^4+1)$ may be considered a 4-dimensional vector space over $\mathbb{C}$ with basis $\{1,\,x,\,x^2,\,x^3\}$, the new factor ring $\mathbb{C}[\dot{x}]/(\dot{x}^4-1)$ has basis $\{1,\,\omega_8x,\,(\omega_8x)^2,\,(\omega_8x)^3\}$.  So the \emph{stride} and $W_{stride}$ exhibited by each set of highlighted nodes in the flowgraph is preserved.

\subsection{Generating a Family Member FFT Algorithm}

Because $W_s$ is independent of any particular FFT's twiddle factors, we can use it as the basis for generating all members of a family of FFT algorithms represented by a given flowgraph. A valid FFT can be created by \emph{randomly} picking integer $W_b$ values for all nodes in the flowgraph. Given these choices for $W_b$, $W_s$ determines values for $rW_b$ for all nodes in the flowgraph. Next, $W_b$ and $rW_b$ determine values for all twiddle factors, and a unique assignment of twiddle factors distinguishes a member in the family of FFT algorithms. Before we present a more formal algorithm for this process, we must first define how twiddle factors can be determined from $W_b$.

\begin{definition}
Following from Definition \ref{defn:graphweightonbase}, a node's \textbf{twiddle factors}, $\omega^{ltfp}_n$ and $\omega^{rtfp}_n$, can be determined from $W_b$:
\begin{align*}
(nd.lp.\omega^{tfp}_n)(nd.lp.\omega^{W_b}_n) &= nd.\omega^{W_b}_n\\
nd.lp.\omega^{tfp}_n &= (nd.\omega^{W_b}_n)/(nd.lp.\omega^{W_b}_n).
\end{align*}
This can be expressed as (mod $n$) subtraction of powers:
\begin{equation*}
 nd.lp.tfp = nd.W_b - nd.lp.W_b \text{ (mod }n\text{)}.
\end{equation*}
The twiddle factor for a right parent is similarly defined as:
\begin{equation*}
 nd.rp.tfp = nd.rW_b - nd.rp.W_b \text{ (mod }n\text{)}.
\end{equation*}
\label{defn:graphtwiddlefactors} 
\end{definition}

\begin{example}
 Consider the node $nd(2,1,6)$ in Figure \ref{fig:conjugatesplitradixfft}. For this node, $nd.W_b=3$ and $nd.rW_b=9$ are specified. Also, $nd.lp.W_b=0$ and $nd.rp.W_b=12$ are specified. Hence,
\begin{align*}
 nd.lp.tfp &= nd.W_b - nd.lp.W_b \text{ (mod }n\text{)}\\
  &= 3 - 0 \text{ (mod }16\text{)}\\
  &= 3,
\end{align*}
which is the twiddle factor applied to that edge by $nd.lp$ as shown in the Figure. Likewise,
\begin{align*}
 nd.rp.tfp &= nd.rW_b - nd.rp.W_b \text{ (mod }n\text{)}\\
 &= 9 - 12 \text{ (mod }16\text{)}\\
 &= 13,
\end{align*}
which is the twiddle factor applied to that edge by $nd.rp$.
\end{example}

\begin{algorithm}[H]
\DontPrintSemicolon
\KwIn{Size-$n$ flowgraph with labeled invariants}
\KwOut{Size-$n$ flowgraph with twiddle factors assigned}
\ForEach {$nd \in flowgraph$}{
\eIf{$nd.stride \neq n$} {
  $nd.W_b \leftarrow rand()$ (mod $n$)\;
  $nd.rW_b \leftarrow nd.W_b+nd.W_s$ ( mod $n$)\;
}{
 $nd.W_b \leftarrow 0$\;
}
}
\ForEach {$nd \in flowgraph$}{
\If{$nd.stride \neq n$} {
  $nd.lp.tfp \leftarrow nd.W_b - nd.lp.W_b$ (mod $n$)\;
  $nd.rp.tfp \leftarrow nd.rW_b - nd.rp.W_b$ (mod $n$)\;
}
\If{$nd.stride = 1$} {
 $nd.tfp \leftarrow 0 - nd.W_b$ (mod $n$)\;
}
}
\caption{How to Generate a Random Member FFT Algorithm}
\label{algo:random} 
\end{algorithm}

\begin{example}
 Figure \ref{fig:randomfft} shows a random member from the family of FFTs realizable by a size-$8$ flowgraph. Consider node $nd(1,0,3)$. Since $nd.stride \neq n$, we assign a random integer (mod 8) of 3 to $nd.W_b$. Following Algorithm \ref{algo:random},
\begin{align*}
 nd.rW_b &= nd.W_b + nd.W_s \text{ (mod }n\text{)}\\
  &= 3 + 3 \text{ (mod }8\text{)}\\
  &= 6.
\end{align*}
This same process is repeated until $W_b$ and $rW_b$ have been assigned for all nodes with $stride \neq n$ in the flowgraph. Nodes with a single input, shown as dotted, always have $W_b=0$ following Definition \ref{defn:graphweightonbase}. Next, actual twiddle factors are computed from the weight assignments. Again consider node $nd(1,0,3)$ and the computation of twiddle factors for that node's parents:
\begin{align*}
 nd.lp.ltfp &= nd.W_b - nd.lp.W_b \text{ (mod }n\text{)}\\
  &= 3 - 1 \text{ (mod }8\text{)}\\
  &= 2\\
 nd.rp.ltfp &= nd.rW_b - nd.rp.W_b \text{ (mod }8\text{)}\\
  &= 6 - 2 \text{ (mod }8\text{)}\\
  &= 4.
\end{align*}
Since $nd$ has no children nodes ($nd.stride=1$) we must compute its twiddle factor as
\begin{align*}
 nd.tfp &= 0 - nd.lp.W_b \text{ (mod }n\text{)}\\
 &= 0 - 3 \text{ (mod }8\text{)}\\
 &= 5.
\end{align*}
This process is repeated until all twiddle factors are assigned. 
\end{example}

\begin{figure}[t]
\centering
\tikzstyle{fftnode}=[rectangle split, rectangle split parts=3, rounded corners, text badly centered, text width=8mm, minimum width=8.1mm, draw=black,inner sep=0.5mm]
\tikzstyle{fftterminal}=[rectangle, text centered]
\tikzstyle{arc}=[->]
\tikzstyle{fftgraph}=[>=latex]
\tikzstyle{fftmatrix}=[row sep=10mm, column sep={13mm,between origins}]
\tikzstyle{background}=[rectangle,rounded corners,fill=gray!25,inner sep=1.0mm]
\tikzstyle{polytext}=[anchor=120,inner sep=0mm]
\tikzstyle{costtext}=[anchor=220,inner sep=0mm]
\tikzstyle{stride}=[inner sep=1.0mm]
\input{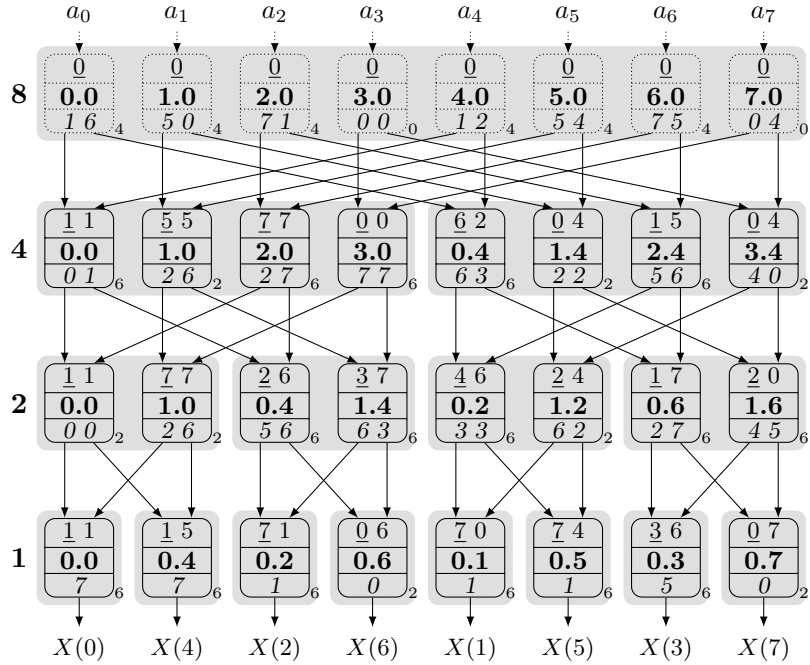}
\caption{Size-8 Random FFT Flowgraph}
\label{fig:randomfft}
\end{figure}

\section{Searching a Family of FFT Algorithms}
\label{sec:secondhalf}

In Section \ref{sec:firsthalf}, a flowgraph representation of common power-of-two FFTs is developed that defines an invariant \emph{weight stride}, $W_s$, for each node in the flowgraph. Algorithm \ref{algo:random} uses the \emph{weight stride} invariants to generate a new assignment of twiddle factors and hence a unique FFT. Since Algorithm \ref{algo:random} is arbitrary, a rich solution space of valid FFTs, called a family, results. In this section, we characterize the size of this family, specify a family as a Satisfiability Modulo Theory (SMT) problem, and demonstrate SMT solver-based search of this solution space. Although this search can be directed in various ways, we use it to prove the lowest total arithmetic complexity (fewest required FLOPs) when all twiddle factors are $n^{th}$ roots of unity.

\subsection{The Size of a Family}

The solution space of valid FFTs for a given flowgraph is extremely large!

\begin{definition}
A size-$n$ flowgraph's \textbf{solution cardinality} is $2^{n \log_2 n \log_2 n}$, and is the number of valid FFTs realizable by the given flowgraph. By direct examination of Algorithm \ref{algo:random}, each node $nd$ in a size-$n$ flowgraph, where $nd.stride \neq n$, is arbitrarily assigned some integer (mod $n$) to $nd.W_b$. Thus, there are $n=2^{\log_2 n}$ possible choices for a single node's $W_b$. And, since there are $n \log_2 n$ nodes in the flowgraph where $nd.stride \neq n$, there are $(2^{\log_2 n})^{n \log_2 n}=2^{n \log_2 n \log_2 n}$ possible assignments of all $W_b$ in a size-$n$ flowgraph.  
\label{defn:solutioncardinality} 
\end{definition}

\begin{example}
For a size-256 flowgraph, there are
\begin{equation*}
 \textbf{solution cardinality} = 2^{256 \log_2 256 \log_2 256} = 2^{16384} 
\end{equation*}
valid FFTs possible. One can better appreciate the magnitude of this number when reminded that the estimated number of atoms in the universe is $2^{264}$ and the current fastest supercomputer performs $2^{144}$ FLOPs per second. Yet this number is very small when compared to all valid and invalid $n^{th}$ root of unity assignments possible for twiddle factors, $2^{34816}$. Thus, for this size-256 flowgraph, there is just a 1 in $2^{18432}$ chance of guessing correct twiddle factors.
\end{example}

Although the solution space is immense, in practice we are only interested in family members with desirable qualities, such as fewest required FLOPs, better precision\footnote{All family members are exact and do not sacrifice numerical accuracy. Imprecision arises from choice of twiddle factors with values very close to zero and consequent floating-point representation limitations.}, improved performance or ease of implementation on a specific microarchitecture. Consequently, we need a way to search this space and find these more desirable family members. 

\subsection{A First SMT Formulation}
\label{sec:smtformulation}

Because of the way concepts were developed in Section \ref{sec:firsthalf}, it is straight-forward to model Algorithm \ref{algo:random} as an SMT problem. This is best illustrated by considering Listing \ref{code:smt}, which shows portions of the SMT model in SMT-LIB 1.2 format\cite{ranise2006smt} that is created to find a lowest arithmetic complexity instance of a size-16 FFT. After a standard preamble, lines 4 and 5 declare the external inputs $nd(2,1,6).W_b$ and $nd(2,1,14).W_b$, which are both 4-bit vectors. Although not shown in the listing, inputs for all undetermined $W_b$ are included. It is for these variables that the SMT solver attempts to find a satisfying assignment.  For nodes where $nd.stride=n$, the value $nd.W_b$ is predetermined to be 0 and is declared as a constant. An example of this is shown in line 9 and corresponds to Algorithm \ref{algo:random} line 6. Next, $rW_b$ for all nodes is computed via addition of $W_b$ and $nd.stride$. An instance of this is seen in line 11 and corresponds to Algorithm \ref{algo:random} line 4. Note that all addition and subtraction is naturally (mod $n$) given the fixed-size bitvectors in the SMT formulation. Twiddle factors for all nodes are computed as illustrated in lines 13 and 14. This corresponds to lines 9-12 of Algorithm \ref{algo:random}. 

Unlike Algorithm \ref{algo:random}, the objective of the SMT model is to find the lowest arithmetic cost. For this, we must compute the cost implied by every twiddle factor. Lines 16-18 show the computation of cost predicates $c0,c4,c6$, (0, 4 or 6 FLOPs for multiplication, respectively), for the left twiddle factor of $nd(4,1,12)$. Not shown in this listing are any necessary predicates $c0,c2,c4,c6$ for multiplication cost incurred by the right twiddle factor.  Line 19 shows cost predicates used in an if--then--else (ITE) tree to compute the multiplication FLOPs required by a node. We compute predicates first and then a numeric cost as the predicates are useful in defining pruning constraints later. In line 21, a total cost is computed by simply adding up all node multiplication costs. Finally, line 22 constrains the total cost to be less than or equal to some constant, and line 23 specifies that this multiplication FLOP constraint is satisfied. Note that the FLOP count due to a node's addition is constant for the flowgraphs under consideration and is not explicitly included in the SMT models.

\lstset{basicstyle=\scriptsize, numbers=left, stepnumber=1, caption=Sample SMT Code, label=code:smt, xleftmargin=20pt, xrightmargin=5pt, captionpos=b,}
\begin{lstlisting}
(benchmark example1
:logic QF_BV
...
:extrafuns ((Wb_2_1_6 BitVec[4]))
:extrafuns ((Wb_2_1_14 BitVec[4]))
...
:formula
...
(let (?Wb_16_14_0 bv0[4])
...
(let (?rWb_2_1_6 (bvadd Wb_2_1_6 bv6[4]))
...
(let (?ltfp_4_1_12 (bvsub Wb_2_1_6 ?Wb_4_1_12))
(let (?ltfp_4_3_12 (bvsub ?rWb_2_1_6 ?Wb_4_3_12))
...
(flet ($c0_4_1_12 (= (extract[1:0] ?ltfp_4_1_12) bv0[2]))
(flet ($c4_4_1_12 (and (= (extract[0:0] ?ltfp_4_1_12) bv0[1]) (not $c0_4_1_12)))
(flet ($c6_4_1_12 (not (= (extract[0:0] ?ltfp_4_1_12) bv0[1])))
(let (?cost_4_1_12 (ite $c6_4_1_12 bv6[4] (ite $c4_4_1_12 bv4[4] bv0[4])))
...
(let (?totalcost (bvadd ?cost_2_2_1 (bvadd ?cost_4_1_12 ?cost_4_3_12)) ...
(flet ($maxcost (bvule ?totalcost bv22[4]))
$maxcost
)...)
\end{lstlisting}

In practice, more care is given to the total cost addition seen in line 21. A balanced adder tree is constructed, where each add uses only as many bits as required for the worst case. Furthermore, following the recursive structure in the FFT apparent from the polynomial view, cost for smaller FFTs are computed first and then combined to compute the cost for larger parent FFTs.    This total cost computation is effectively a pseudo-Boolean constraint, and we have tried implementing it as an if--then--else (ITE) tree similar to the ROBDD-techniques described in \cite{een2006translating}. Our experience is that the adder tree is 2-3 times better in terms of SMT computation time for this particular problem with the Boolector SMT solver\cite{brummayer2009boolector}. We did not implement the sorter-based technique described in \cite{een2006translating}.  

To find the lowest arithmetic complexity, the SMT model is repeatedly solved, each time with a lower value for the constant seen in line 22. At some point the model becomes unsatisfiable and the lowest possible arithmetic complexity is known. Unfortunately, this straight-forward implementation does not scale up. For flowgraphs of size-32, the time for computing the unsatisfiability of a 455 FLOP solution requires 30 seconds using the Boolector solver \cite{brummayer2009boolector} on a 64-bit Intel Core i7 Linux machine. At size-64 and for unsatisfiable cost of 1159 FLOPs, we reach our timeout of 24 hours without determining unsatisfiability.

\subsection{Cost Symmetries}
\label{sec:costsymmetries}

As formulated so far, the SMT model supports the full range of possible values for each twiddle factor since each twiddle factor is modeled as a size-$m$ bitvector. This much information is not necessary for finding the lowest possible arithmetic complexity, and only adds to the complexity of the model. Instead, it is possible to express every twiddle factor as
\begin{equation*}
 \omega_n^{tfp} = \omega_n^{tfp - (tfp \text{ (mod }\nicefrac{n}{4}\text{)})}\omega_n^{tfp \text{ (mod }\nicefrac{n}{4}\text{)}}.
\end{equation*}
In this expression, $\omega_n^{tfp - (tfp \text{ (mod }\nicefrac{n}{4}\text{)})}$ specifies the quadrant in which $\omega_n^{tfp}$ lies and is always a free multiplication by $1$,$-1$,$i$ or $-i$. Consequently, the portion of $\omega_n^{tfp}$ that solely contributes to multiplication cost can be represented by just a quarter of the $n^{th}$ roots of unity and is defined as $\psi_n^{tfp} = \omega_n^{tfp \text{ (mod }\nicefrac{n}{4}\text{)}}$. To simplify the SMT model and upcoming partitioning, we suppress the quadrant rotations, $\omega_n^{tfp - (tfp \text{ (mod }\nicefrac{n}{4}\text{)})}$, and only reason with $\psi_n$.

Multiplication of two $\psi_n$ is well defined and can be expressed as modular arithmetic. Consider the multiplication
\begin{equation*}
 \omega_n^{a+b \text{ (mod }n\text{)}} = \omega_n^{a}\omega_n^{b},
\end{equation*}
which can be re-expressed as
\begin{equation*}
 \omega_n^{a + b - (a+b \text{ (mod }\nicefrac{n}{4}\text{)})\text{ (mod }n\text{)}}  \psi_n^{a + b \text{ (mod }\nicefrac{n}{4}\text{)}}  = \omega_n^{a - (a \text{ (mod }\nicefrac{n}{4}\text{)})}\psi_n^{a} \omega_n^{b - (b \text{ (mod }\nicefrac{n}{4}\text{)})}\psi_n^{b}.
\end{equation*}
If all $\omega_n$ specifying quadrant rotations are ignored, multiplication of two $\psi_n$ is just
\begin{equation*}
 \psi_n^{a+b \text{ (mod }\nicefrac{n}{4}\text{)}} = \psi_n^{a}\psi_n^{b},
\end{equation*}
which can be expressed easily using modular arithmetic in the SMT model.

From the bitvector perspective, suppressing $\omega_n$ quadrant rotations means that the two most significant bits of every weight on base, $W_b$ or $rW_b$, need not be included in the SMT model. The SMT solver finds a satisfying assignment for all but the two most significant bits of every weight on base. The two most significant bits are then picked at random as done in Algorithm \ref{algo:random} without altering cost. In the end, all bits must be assigned to realize a correct FFT.

Eliminating these cost symmetries in the SMT model reduces a size-$n$ flowgraph's solution cardinality to $2^{n \log_2 n ((\log_2 n)-2)}$. For a size-256 flowgraph, this is a substantial reduction in the size of the solution space from $2^{16384}$ to $2^{12288}$. Computation time for proving that a size-32 flowgraph has no solution with total cost equal to or less than 455 FLOPs is now 27 seconds. The timeout of 24 hours is still reached for a size-64 flowgraph constrained to 1159 FLOPs. It is possible that the SMT computation time improves only modestly since the SMT solver is detecting these cost symmetries without explicit help.

\subsection{Butterflies}

 The next three techniques to simplify and partition the SMT model require reasoning with FFT butterflies. Although FFT butterflies are a well established idea, we define and review concepts relevant to our SMT model.  
 
\begin{definition}
A size-$q$ \textbf{butterfly} is any subgraph of a size-$n$ FFT flowgraph that is graph isomorphic to a size-$q$ FFT flowgraph where $q \leq n$. A butterfly's \textbf{canonical label} is $bf(nd.stride,nd.base,nd.W_s,q)$ for the single node $nd \in bf$ such that $nd.stride,nd.base$ and $nd.W_s$ are less than or equal to the $stride,base$ and $W_s$ of any other node in the butterfly. As with all FFT flowgraphs considered in this paper, $q$ must be some power of two.
\label{defn:butterfly} 
\end{definition}

\begin{example}
In Figure \ref{fig:conjugatesplitradixfft}, the butterfly $bf(1,0,0,2)$ contains the four nodes $nd(1,0,0)$, $nd(1,0,8)$, $nd(2,0,0)$ and $nd(2,1,0)$. The expected traditional butterfly structure is clearly seen with the node used for identification, $nd(1,0,0)$, at the bottom left. This same node, $nd(1,0,0)$, is also used to identify the size-$4$ butterfly $bf(1,0,0,4)$ which contains 12 nodes and is also clearly visible. Less obvious are small butterflies that appear toward the top of the flowgraph such as $bf(4,2,0,2)$ which contains the four nodes $nd(4,2,0)$, $nd(4,2,8)$, $nd(8,2,0)$ and $nd(8,6,0)$. The larger butterfly $bf(4,2,0,4)$ which contains 12 nodes can also be traced with $nd(4,2,0)$ anchoring the bottom left corner. Finally, the entire flowgraph in Figure \ref{fig:conjugatesplitradixfft} can be denoted as $bf(1,0,0,16)$.
\end{example}

It is also useful to refer to nodes in an arbitrary butterfly not by canonical node label but by relative position. To facilitate this, one can view the nodes of a butterfly as forming a matrix and use matrix row,column indexing to refer to a specific node, $nd_{r,c}$. For example, for any size-$2$ butterfly, the top-left corner node is $nd_{0,0}$, the top-right corner node is $nd_{0,1}$, the bottom-left corner node is $nd_{1,0}$, and the bottom-right corner node is $nd_{1,1}$.

\begin{prop}
For size-$2$ and size-$4$ butterflies in the flowgraph, all $W_s$ for nodes in the same row are congruent modulo $\nicefrac{n}{4}$. The value to which they are all congruent modulo $\nicefrac{n}{4}$ is referred to as $nd_{r,*}.W_s$. This property arises from the correspondence of $W_s$ in a flowgraph to the polynomial view as described in Section \ref{sec:polyview}.
\label{prop:congruentweightstrides}
\end{prop}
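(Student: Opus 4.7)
The plan is to reason via the polynomial view of Section~\ref{sec:polyview}: nodes lying in a common degree-$s$ polynomial factor $x^s - \omega_n^{\tau}$ of $x^n - 1$ share the value $W_s = \tau$. A size-$q$ butterfly corresponds in this view to a depth-$\log_2 q$ binary sub-tree of the factor lattice, rooted at a single polynomial $x^{s_{\text{top}}} - \omega_n^{\tau_0}$; hence row $0$ of the butterfly is drawn entirely from this one polynomial and all of its top-row nodes trivially share $W_s = \tau_0$.

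Proceeding by induction on the row index $r$, I would use the fact that the graph isomorphism to a size-$q$ FFT forces the internal splits to be the standard binary factorizations $x^s - \omega_n^{t} = (x^{s/2} - \omega_n^{t/2})(x^{s/2} - \omega_n^{t/2 + n/2})$. Applying this to each of the $2^{r-1}$ polynomials present at row $r-1$ doubles the count and introduces an additive shift of $n/2$ between paired siblings. I would then show that the $2^r$ constant-term exponents appearing at row $r$ form a single coset of $(n/2^r)\mathbb{Z}$ in $\mathbb{Z}/n\mathbb{Z}$, so that any two $W_s$ values in row $r$ differ by a multiple of $n/2^r$.

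To finish, for $q \in \{2,4\}$ we have $r \le \log_2 q \le 2$, and hence $n/2^r \in \{n, n/2, n/4\}$ is always an integer multiple of $n/4$. Every $W_s$ at row $r$ therefore reduces to the same value modulo $n/4$, which is $nd_{r,*}.W_s$. As a sanity check, in a hypothetical size-$8$ butterfly the bottom row ($r = 3$) would have $W_s$ values differing by multiples of $n/8$---not of $n/4$---so the restriction $q \le 4$ in the statement is tight.

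The main obstacle will be keeping the modular arithmetic honest: the halving $t \mapsto t/2$ modulo $n$ is multi-valued (ambiguous by $n/2$), and the inductive claim must be phrased as a coset statement rather than as an identity on integer representatives. Since $n = 2^m$, the halving remains well-defined throughout the depth of any butterfly inside the flowgraph, so no genuine arithmetic obstruction arises---only care in formulating the induction.
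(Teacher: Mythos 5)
Your argument is correct and follows essentially the same route as the paper: the paper does not give a formal proof but simply asserts the property via the correspondence between $W_s$ and the constant-term exponents in the polynomial factor lattice of Section~\ref{sec:polyview}, illustrated by an example. Your row-by-row induction on the splitting $x^s-\omega_n^t=(x^{s/2}-\omega_n^{t/2})(x^{s/2}-\omega_n^{t/2+n/2})$, with the coset observation that row-$r$ exponents differ by multiples of $n/2^r$ (hence by multiples of $\nicefrac{n}{4}$ for $r\le 2$), is just a careful elaboration of that same polynomial-view justification, including the correct explanation of why the statement is restricted to size-$2$ and size-$4$ butterflies.
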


\begin{example}
Consider the size-$4$ butterfly $bf(1,0,3,4)$ from Figure \ref{fig:conjugatesplitradixfft}. By inspection, 
\begin{align*}
  nd_{0,*}.W_s &= \{12,12,12,12\} & &\equiv 0 \text{ (mod }\nicefrac{n}{4}\text{)}\\
  nd_{1,*}.W_s &= \{6,6,14,14\} & &\equiv 2 \text{ (mod }\nicefrac{n}{4}\text{)}\\
  nd_{2,*}.W_s &= \{3,11,7,15\} & &\equiv 3 \text{ (mod }\nicefrac{n}{4}\text{)}.\\
\end{align*}
\end{example}

\subsection{Shared Twiddle Factors}

Our formulation permits two multiplications, by $\omega^{ltfp}_n$ and $\omega^{rtfp}_n$, per node in the FFT flowgraph. Although this generality may be useful for some algorithms, we show here that it is not needed when minimizing the total FLOP count is the objective. In fact, it only increases the complexity of the SMT model.

\begin{theorem}
 For any size-$2$ butterfly, $bf_1$, such that left and right twiddle factors are unshared per node in row 0 ($nd_{0,c}.ltfp \neq nd_{0,c}.rtfp$) but shared per node in row 1 ($nd_{1,c}.ltfp = nd_{1,c}.rtfp$) there exists another size-$2$ butterfly, $bf_2$, such that left and right twiddle factors are shared per node for all nodes, that realizes all final weighted sums $X(0)$ and $X(1)$ possible by $bf_1$. Furthermore, no $bf_1$ exists with lower FLOP count than some $bf_2$.
\label{thm:sharedtwiddlefactors}
\end{theorem}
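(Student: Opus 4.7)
My approach is to construct $bf_2$ from $bf_1$ by judiciously reassigning the $W_b$ values at the row-$1$ nodes, exploiting the row-$1$ $W_s$ congruence from Property~\ref{prop:congruentweightstrides} together with the cost-symmetry reduction of Section~\ref{sec:costsymmetries}, under which multiplications by $\omega_n^k$ with $k \equiv 0 \pmod{\nicefrac{n}{4}}$ are free.

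First I will label the $W_b$ values as $w_{00}, w_{01}$ for row~$0$ and $w_{10}, w_{11}$ for row~$1$, and use Definition~\ref{defn:graphtwiddlefactors} to write the four row-$0$ twiddle factors as
\begin{align*}
nd_{0,0}.ltfp &= w_{10} - w_{00}, & nd_{0,0}.rtfp &= w_{11} - w_{00}, \\
nd_{0,1}.ltfp &= w_{10} + nd_{1,0}.W_s - w_{01}, & nd_{0,1}.rtfp &= w_{11} + nd_{1,1}.W_s - w_{01}.
\end{align*}
Exact row-$0$ sharing at both nodes requires $w_{10} = w_{11}$ together with $nd_{1,0}.W_s = nd_{1,1}.W_s$, and Property~\ref{prop:congruentweightstrides} guarantees that the latter holds at least modulo $\nicefrac{n}{4}$. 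I will therefore define $bf_2$ by setting $w_{10}' = w_{11}' = c$ for a common value $c$ chosen from the one-parameter family of options, leaving $w_{00}$ and $w_{01}$ unchanged. This forces $nd_{0,0}$ to be shared exactly and $nd_{0,1}$ to be shared modulo $\nicefrac{n}{4}$. Row-$1$ sharing is preserved because the shift of $w_{1c}$ affects $nd_{1,c}.ltfp$ and $nd_{1,c}.rtfp$ identically, and the downstream outputs $X(0)$ and $X(1)$ are preserved since they depend only on the external row-$2$ $W_b$ values, which are not altered by a row-$1$ shift.

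Next I will compare FLOP counts. In $bf_2$ each row-$0$ node costs exactly $6$~FLOPs: $nd_{0,0}$ because its two output twiddles are equal, and $nd_{0,1}$ because its two output twiddles differ by a multiple of $\nicefrac{n}{4}$, so the second output arises from the first by a free quadrant rotation. In $bf_1$ each unshared row-$0$ node costs $12$~FLOPs in the generic case or $8$~FLOPs in the conjugate-pair case, both strictly greater than $6$, so row-$0$ savings are at least $2$~FLOPs per node. Row-$1$ costs in $bf_2$ may differ from those in $bf_1$ by amounts determined by $c - w_{10} \pmod{\nicefrac{n}{4}}$ at $nd_{1,0}$ and $c - w_{11} \pmod{\nicefrac{n}{4}}$ at $nd_{1,1}$; I will then argue by case analysis over the $\nicefrac{n}{4}$ choices of $c$ modulo $\nicefrac{n}{4}$ that at least one yields a $bf_2$ whose row-$1$ cost increase is dominated by the row-$0$ savings.

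The main obstacle I anticipate is this final case-analysis step. The cleanest argument separates cases according to whether $w_{11} - w_{10}$ is or is not a multiple of $\nicefrac{n}{4}$. When it is, choosing $c \in \{w_{10}, w_{10} + \nicefrac{n}{4}, w_{10} + \nicefrac{n}{2}, w_{10} + \nicefrac{3n}{4}\}$ leaves both row-$1$ twiddle factors in the same cost class as in $bf_1$, so row-$1$ cost is preserved and the row-$0$ savings give a strict net reduction. When it is not, the cost contributions at $nd_{1,0}$ and $nd_{1,1}$ vary independently over the residues of $c$, and a pigeonhole argument over these residues must exhibit one whose total row-$1$ cost increase is no larger than the row-$0$ savings of $12$~FLOPs in the generic case. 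Completing this dominance calculation, together with checking that every claimed quadrant rotation is indeed free under the cost model, is the technical crux of the proof.
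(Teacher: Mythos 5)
Your construction of $bf_2$ --- shifting the row-1 weights on base to a common value $c$ and letting the compensating change land in the row-1 output twiddle factors --- is essentially the same refactoring the paper uses: the paper's three candidates $bf_{2A}$, $bf_{2B}$, $bf_{2C}$ are exactly the choices $c=w_{00}$, $c=w_{10}$, $c=w_{11}$ in your one-parameter family, and, like you, the paper only obtains sharing up to quadrant rotations, since its relation between left and right row-0 factors (Equation \ref{eq:ltfprtfprelation}) is a congruence modulo $n/4$ at the $\psi_n$ level. So the existence half of your argument is sound and matches the paper.

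The genuine gap is in the cost half, which is the substantive second claim of the theorem. The paper does not prove it analytically: it builds an SMT miter between an arbitrary $bf_1$ and the three candidates and lets the solver verify exhaustively, per size $n$ (or by induction), that no $bf_1$ has lower FLOP count than all three --- and its worked example explicitly cautions that the candidates' costs need not equal $bf_1$'s, so a real dominance argument is required. Your replacement for that step is both incomplete and based on incorrect intermediate claims. ``Each row-0 node of $bf_2$ costs exactly 6 FLOPs'' is false: the shared factor can be trivial (cost 0) or of the $\sqrt{i}$ type (cost 4). Likewise ``each unshared row-0 node of $bf_1$ costs 12 or 8, hence strictly more than 6'' is false: unshared only means $nd_{0,c}.ltfp \neq nd_{0,c}.rtfp$, so one of the two factors may be free or a 4-FLOP factor, giving node costs of 6, 4, or even 0 (two distinct free factors in different quadrants). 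Hence the advertised row-0 savings of ``at least 2 FLOPs per node'' (12 in the generic case) do not exist in general, and the concluding pigeonhole over the residues of $c$ --- which you yourself identify as the crux --- is never carried out; that exhaustive case analysis over the cost model, including the conjugate-pair discount, is precisely what the paper delegates to the solver. As written, the proposal does not establish that some shared-twiddle $bf_2$ is never more expensive than $bf_1$.
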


\begin{proof}
The proof is in two parts. First, we prove the existence of three different $bf_2$. Consider the computation performed by $bf_1$,
\begin{subequations}\label{eq:arb2}
\begin{align}
 X(0) &\equiv \wnf^{nd_{1,0}.tfp}(a_0\wnf^{nd_{0,0}.ltfp} + a_1\wnf^{nd_{0,1}.ltfp}) \mdf,\label{eq:X0} \\
 X(1) &\equiv \wnf^{nd_{1,1}.tfp}(a_0\wnf^{nd_{0,0}.rtfp} + a_1\wnf^{nd_{0,1}.rtfp}) \mdf.\label{eq:X1}
\end{align}
\end{subequations}
Because of Property \ref{prop:congruentweightstrides} and Theorem \ref{thm:weightstride}, $nd_{0,*}$ left and right twiddle factors are related by a common weight stride,
\begin{equation*}
 \wnf^{nd_{1,*}.W_s} \equiv \frac{\wnf^{nd_{0,1}.W_b}\wnf^{nd_{0,1}.ltfp}}{\wnf^{nd_{0,0}.W_b}\wnf^{nd_{0,0}.ltfp}} 
		     \equiv \frac{\wnf^{nd_{0,1}.W_b}\wnf^{nd_{0,1}.rtfp}}{\wnf^{nd_{0,0}.W_b}\wnf^{nd_{0,0}.rtfp}} \mdf.
\end{equation*}
This simplifies to
\begin{equation}
 \frac{\wnf^{nd_{0,1}.ltfp}}{\wnf^{nd_{0,0}.ltfp}} \equiv \frac{\wnf^{nd_{0,1}.rtfp}}{\wnf^{nd_{0,0}.rtfp}} \mdf
\label{eq:ltfprtfprelation}, 
\end{equation}
which directly relates left and right twiddle factors for nodes in row 0 of $bf_1$.

Equations \ref{eq:arb2} and \ref{eq:ltfprtfprelation} can be used to derive three different $bf_2$, labeled $bf_{2A}$, $bf_{2B}$ and $bf_{2C}$. Butterfly $bf_{2A}$, with $nd_{0,0}.lftp=nd_{0,0}.rtfp=0$, is created by multiplying Equation \ref{eq:X0} by $1=\frac{\wnf^{nd_{0,0}.ltfp}}{\wnf^{nd_{0,0}.ltfp}}$ and Equation \ref{eq:X1} by $1=\frac{\wnf^{nd_{0,0}.rtfp}}{\wnf^{nd_{0,0}.rtfp}}$,
\begin{align*}
 X(0) &\equiv (\wnf^{nd_{1,0}.tfp}\wnf^{nd_{0,0}.ltfp})(a_0\frac{\wnf^{nd_{0,0}.ltfp}}{\wnf^{nd_{0,0}.ltfp}} + a_1\frac{\wnf^{nd_{0,1}.ltfp}}{\wnf^{nd_{0,0}.ltfp}}) \mdf \\
 X(1) &\equiv (\wnf^{nd_{1,1}.tfp}\wnf^{nd_{0,0}.rtfp})(a_0\frac{\wnf^{nd_{0,0}.rtfp}}{\wnf^{nd_{0,0}.rtfp}} + a_1\frac{\wnf^{nd_{0,1}.rtfp}}{\wnf^{nd_{0,0}.rtfp}}) \mdf.
\end{align*}
After simplification of twiddle factors, the two twiddle factors applied to $a_0$ are now shared ($\wnf^0$) and the two twiddle factors applied to $a_1$ are also shared due to Equation \ref{eq:ltfprtfprelation}.

Butterfly $bf_{2B}$, with $nd_{0,0}.rtfp$ made equal to $nd_{0,0}.ltfp$, is created by multiplying Equation \ref{eq:X1} by $1=\frac{\wnf^{nd_{0,0}.ltfp}\wnf^{nd_{0,0}.rtfp}}{\wnf^{nd_{0,0}.ltfp}\wnf^{nd_{0,0}.rtfp}}$,
\begin{align*}
 X(0) &\equiv \wnf^{nd_{1,0}.tfp}(a_0\wnf^{nd_{0,0}.ltfp} + a_1\wnf^{nd_{0,1}.ltfp}) \mdf\\ 
 X(1) &\equiv (\wnf^{nd_{1,1}.tfp}\frac{\wnf^{nd_{0,0}.rtfp}}{\wnf^{nd_{0,0}.ltfp}})(a_0\frac{\wnf^{nd_{0,0}.ltfp}\wnf^{nd_{0,0}.rtfp}}{\wnf^{nd_{0,0}.rtfp}} + a_1\frac{\wnf^{nd_{0,0}.ltfp}\wnf^{nd_{0,1}.rtfp}}{\wnf^{nd_{0,0}.rtfp}}) \mdf.
\end{align*}
Again from direct inspection and application of Equation \ref{eq:ltfprtfprelation}, every node shares left and right twiddle factors.

Butterfly $bf_{3C}$, with $nd_{0,0}.ltfp$ made equal to $nd_{0,0}.rtfp$, is created by multiplying Equation \ref{eq:X0} by $1=\frac{\wnf^{nd_{0,0}.ltfp}\wnf^{nd_{0,0}.rtfp}}{\wnf^{nd_{0,0}.ltfp}\wnf^{nd_{0,0}.rtfp}}$,
\begin{align*}
 X(0) &\equiv (\wnf^{nd_{1,0}.tfp}\frac{\wnf^{nd_{0,0}.ltfp}}{\wnf^{nd_{0,0}.rtfp}})(a_0\frac{\wnf^{nd_{0,0}.ltfp}\wnf^{nd_{0,0}.rtfp}}{\wnf^{nd_{0,0}.ltfp}} + a_1\frac{\wnf^{nd_{0,1}.ltfp}\wnf^{nd_{0,0}.rtfp}}{\wnf^{nd_{0,0}.ltfp}}) \mdf \\ 
 X(1) &\equiv \wnf^{nd_{1,1}.tfp}(a_0\wnf^{nd_{0,0}.rtfp} + a_1\wnf^{nd_{0,1}.rtfp}) \mdf.
\end{align*}
Here, too, every node shares left and right twiddle factors.

Second, exhaustive search with SMT is used to prove that no $bf_1$ exists with lower FLOP count than some $bf_2$. The SMT-based proof is a miter between $bf_1$ and $bf_{2A}$, $bf_{2B}$ and $bf_{2C}$. The $bf_1$ side of the miter is a size-$2$ FFT modeled in SMT as described in Section \ref{sec:smtformulation}. Additional constraints that $nd_{1,0}.ltfp = nd_{1,0}.rtfp$ and $nd_{1,1}.ltfp = nd_{1,1}.rtfp$ are added for $bf_1$. The $bf_2$ side of the miter includes models for all three cases A, B and C. These are also modeled in SMT as described in Section \ref{sec:smtformulation} but with the additional constraint that each node has just one twiddle factor, $tfp$. Furthermore, the constraints $\nobreak{bf_{2A}.nd_{0,0}=0}$, $bf_{2B}.nd_{0,0}.tfp = bf_1.nd_{0,0}.ltfp$, and $bf_{2C}.nd_{0,0}.tfp = nd_{0,0}.rtfp$ are included with the respective $bf_2$ models. Input values $a_j$ with arbitrary initial weights on base $nd_{0,c}.W_b$ and row weight strides $nd_{1,*}.W_s$ are common to all $bf_1$ and $bf_2$.
The free variables decided by the SMT solver include these common initial weights on base and row weight strides as well as weights on base per node for all row 1 nodes in $bf_1$ and $bf_2$. FLOP counts for $bf_1$, $bf_{2A}$, $bf_{2B}$ and $bf_{2C}$, are individually and explicitly tallied within the SMT model. The question posed to the SMT solver is to find a $bf_1$ with lower FLOP count than $bf_{2A}$, $bf_{2B}$ or $bf_{2C}$. The theorem is proved for some $n$ if the SMT solver returns unsatisfiable. The proof can be run once for every size-$n$ FFT flowgraph under consideration, or induction can be used to establish the result for $n+1$ and higher. 
\end{proof}

\begin{example}
Consider the concrete computation performed by some $bf_1$ from a size-$16$ FFT flowgraph expressed as,
\begin{align*}
 X(0) &\equiv \psi_{16}^0(a_0\psi_{16}^1 + a_1\psi_{16}^3) \mdf \\
 X(1) &\equiv \psi_{16}^0(a_0\psi_{16}^3 + a_1\psi_{16}^1) \mdf.
\end{align*}
By substituting into Equation \ref{eq:ltfprtfprelation}, we see that this is a valid butterfly with weight stride adhering to Property \ref{prop:congruentweightstrides},
\begin{equation*}
 \frac{\psi_{16}^3}{\psi_{16}^1} \equiv \frac{\psi_{16}^1}{\psi_{16}^3} \equiv \psi_{16}^2 \mdf.
\end{equation*}
Some sharing can occur during the complex multiplication of $a_0$ and $a_1$ with these left and right twiddle factors since $\Re{(\psi_{16}^1)}=\Im{(\psi_{16}^3)}$ and $\Re{(\psi_{16}^3)}=\Im{(\psi_{16}^1)}$. Hence, the multiplication FLOP count for $bf_1$ is only $16=8+8+0+0$.

Butterfly $bf_{2A}$ has only a $\psi_{16}^0$ twiddle factor applied to $a_0$,
\begin{align*}
 X(0) &\equiv \psi_{16}^1(a_0\psi_{16}^0 + a_1\psi_{16}^2) \mdf \\
 X(1) &\equiv \psi_{16}^3(a_0\psi_{16}^0 + a_1\psi_{16}^2) \mdf.
\end{align*}
Note that the twiddle factors applied to $a_0$ and $a_1$, $\psi_{16}^0$, are shared for $X(0)$ and $X(1)$. The results for $X(0)$ and $X(1)$ are still equivalent to $bf_1$ as the final twiddle factors, $nd_{1,c}.tfp$, are now adjusted by factoring out $\psi_{16}^1$ and $\psi_{16}^3$ respectively. The total multiplication cost for $bf_{2A}$ is $16=0+4+6+6$, which is the same as $bf_1$.

Butterfly $bf_{2B}$ has only a $\psi_{16}^1$ twiddle factor applied to $a_0$,
\begin{align*}
 X(0) &\equiv \psi_{16}^0(a_0\psi_{16}^1 + a_1\psi_{16}^3) \mdf \\
 X(1) &\equiv \psi_{16}^2(a_2\psi_{16}^1 + a_1\psi_{16}^3) \mdf.
\end{align*}
The $\psi_{16}^2$ is factored out of the sum in $X(1)$ to maintain equivalence with $bf_1$. The total cost for $bf_{2B}$ is also $16=6+6+0+4$.

Butterfly $bf_{2C}$ has only a $\psi_{16}^3$ twiddle factor applied to $a_0$,
\begin{align*}
 X(0) &\equiv \psi_{16}^2(a_0\psi_{16}^3 + a_1\psi_{16}^1) \mdf \\
 X(1) &\equiv \psi_{16}^0(a_2\psi_{16}^3 + a_1\psi_{16}^1) \mdf.
\end{align*}
The $\psi_{16}^2$ is factored out of the sum in $X(0)$ to maintain equivalence with $bf_1$. The total cost for $bf_{2C}$ is also $16=6+6+4+0$.
Although all butterflies in this example have the same multiplication cost, this is not always the case in general. The SMT portion of the proof of Theorem \ref{thm:sharedtwiddlefactors} shows that at least one case of $bf_2$ will have FLOP count less than or equal to $bf_1$.
\end{example}

The definition of $bf_1$ in Theorem \ref{thm:sharedtwiddlefactors} requires that twiddle factors be shared per node in row 1, $nd_{1,c}.ltfp = nd_{1,c}.rtfp$. Butterflies meeting this constraint only occur at the bottom of the FFT flowgraph, where a single weight may be applied to some $X(k)$. But after applying Theorem \ref{thm:sharedtwiddlefactors} to all terminal size-$2$ butterflies in the bottom row, we now have shared twiddle factors in the next to the bottom row of the FFT flowgraph. Therefore, Theorem \ref{thm:sharedtwiddlefactors} can be applied iteratively to the entire flowgraph, starting from the bottom and proceeding to the top, so that all nodes have a single twiddle factor, $tfp$, without any FLOP count penalty.

\begin{prop}
For any size-$2$ butterfly from a FFT flowgraph, if all nodes have a single shared twiddle factor $tfp$, then $nd_{1,0}.W_b \equiv nd_{1,1}.W_b \mdf$. This is because $nd_{1,0}$ and $nd_{1,1}$ both have the same left parent with the same $tfp \mdf$ applied.
\label{prop:congruentweightonbase}
\end{prop}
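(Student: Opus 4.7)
The plan is to unfold the definition of $W_b$ along the two paths from the shared top-left parent to each of the two bottom-row nodes and observe that the single-twiddle-factor hypothesis forces the two results to agree modulo $\nicefrac{n}{4}$.

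First I would fix attention on the structural fact about a size-$2$ butterfly: in the standard butterfly layout both bottom nodes $nd_{1,0}$ and $nd_{1,1}$ have $nd_{0,0}$ as their left parent, with $nd_{1,0}$ appearing as $nd_{0,0}$'s left child and $nd_{1,1}$ as its right child. This is visible in the butterfly isomorphism used throughout Section \ref{sec:firsthalf} and, combined with Property \ref{prop:orderedbases}, determines exactly which output edge of $nd_{0,0}$ feeds into the left input of each bottom node.

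Next I would invoke Definition \ref{defn:graphweightonbase}, expressed as a modular addition of twiddle factor powers as in Definition \ref{defn:graphtwiddlefactors}, to write
\begin{align*}
 nd_{1,0}.W_b &\equiv nd_{0,0}.W_b + nd_{0,0}.ltfp \pmod{n},\\
 nd_{1,1}.W_b &\equiv nd_{0,0}.W_b + nd_{0,0}.rtfp \pmod{n}.
\end{align*}
Subtracting one congruence from the other yields $nd_{1,0}.W_b - nd_{1,1}.W_b \equiv nd_{0,0}.ltfp - nd_{0,0}.rtfp \pmod{n}$. The single-shared-twiddle-factor hypothesis, interpreted in the cost-symmetric $\psi_n$ representation from Section \ref{sec:costsymmetries}, is exactly that $nd_{0,0}.ltfp \equiv nd_{0,0}.rtfp \pmod{\nicefrac{n}{4}}$ (and in the un-reduced setting the two are equal outright, which is a strictly stronger statement). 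Reducing the previous congruence modulo $\nicefrac{n}{4}$ then gives the desired $nd_{1,0}.W_b \equiv nd_{1,1}.W_b \pmod{\nicefrac{n}{4}}$.

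There is essentially no obstacle here; the only subtlety is bookkeeping about which twiddle factor of $nd_{0,0}$ lands on which left-input edge of a bottom node, which is resolved by the butterfly isomorphism and the left/right conventions of Definition \ref{defn:graphbase} and Property \ref{prop:orderedbases}. The result is really just a one-line consequence of Definition \ref{defn:graphtwiddlefactors} once one notes the shared left parent, and it is precisely the hint already given in the statement of the property.
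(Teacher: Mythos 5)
Your proposal is correct and follows exactly the paper's own justification: the two bottom-row nodes share the top-left node as their common left parent, and the shared-twiddle-factor hypothesis makes the power added along both left-input edges the same (mod $\nicefrac{n}{4}$), so the two weights on base agree by Definition \ref{defn:graphweightonbase}. The paper states this only as the one-line rationale embedded in the property; your unfolding into the explicit modular congruences is the same argument made precise.
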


Given Property \ref{prop:congruentweightonbase}, Algorithm \ref{algo:random} can now be updated so that the SMT formulation assigns a weight on base per size-$2$ butterfly and not per node. Instead of assigning a random $W_{b}$ per node as seen in line 3 of the algorithm, a random $W_{b}$ is assigned per bottom two nodes of every size-$2$ butterfly. This reduces the number of free $W_{b}$ variables by half and substantially speeds up the SMT-based search. Shared twiddle factors in the SMT model reduce a size-$n$ flowgraph's solution cardinality to $2^{\frac{n}{2} \log_2 n ((\log_2 n)-2)}$. For a size-256 flowgraph, this further reduces the solution space to $2^{6144}$. Computation time for proving that a size-32 flowgraph has no solution with total cost less than or equal to 455 FLOPs is now 3.5 seconds. The timeout of 24 hours is still reached for a size-64 flowgraph constrained to 1159 FLOPs.

\subsection{Partitioning}

Every SMT model formulated so far has been monolithic, and it has been computationally difficult to prove the lowest arithmetic complexity for any FFT larger than size-32. In this section, we show that analysis of butterflies at the top and bottom of the flowgraph can be used to partition larger FFTs into several smaller SMT models that can be solved. This analysis is facilitated by explicitly writing out the final weight on base computations, with all operations congruent$\mdf$, for an arbitrary size-$4$ butterfly:
\begin{align}
 X(0).W_b &\equiv nd_{2,0}.tfp + nd_{1,0}.tfp + nd_{0,0}.tfp + nd_{0,0}.W_b \notag \\
 X(0).W_b &\equiv nd_{2,0}.tfp + nd_{1,0}.tfp + nd_{0,2}.tfp + nd_{0,2}.W_b - nd_{1,*}.W_s \notag \\
 X(0).W_b &\equiv nd_{2,0}.tfp + nd_{1,1}.tfp + nd_{0,1}.tfp + nd_{0,1}.W_b - nd_{2,*}.W_s \notag \\
 X(0).W_b &\equiv nd_{2,0}.tfp + nd_{1,1}.tfp + nd_{0,3}.tfp + nd_{0,3}.W_b - nd_{2,*}.W_s - nd_{1,*}.W_s \notag \displaybreak[0]\\
 X(2).W_b &\equiv nd_{2,1}.tfp + nd_{1,0}.tfp + nd_{0,0}.tfp + nd_{0,0}.W_b \notag \\
 X(2).W_b &\equiv nd_{2,1}.tfp + nd_{1,0}.tfp + nd_{0,2}.tfp + nd_{0,2}.W_b - nd_{1,*}.W_s \notag \\
 X(2).W_b &\equiv nd_{2,1}.tfp + nd_{1,1}.tfp + nd_{0,1}.tfp + nd_{0,1}.W_b - nd_{2,*}.W_s \notag \\
 X(2).W_b &\equiv nd_{2,1}.tfp + nd_{1,1}.tfp + nd_{0,3}.tfp + nd_{0,3}.W_b - nd_{2,*}.W_s - nd_{1,*}.W_s \label{eq:arbitrarysize4butterfly} \displaybreak[0]\\
 X(1).W_b &\equiv nd_{2,3}.tfp + nd_{1,3}.tfp + nd_{0,0}.tfp + nd_{0,0}.W_b \notag \\
 X(1).W_b &\equiv nd_{2,3}.tfp + nd_{1,3}.tfp + nd_{0,2}.tfp + nd_{0,2}.W_b - nd_{1,*}.W_s \notag \\
 X(1).W_b &\equiv nd_{2,3}.tfp + nd_{1,4}.tfp + nd_{0,1}.tfp + nd_{0,1}.W_b - nd_{2,*}.W_s \notag \\
 X(1).W_b &\equiv nd_{2,3}.tfp + nd_{1,4}.tfp + nd_{0,3}.tfp + nd_{0,3}.W_b - nd_{2,*}.W_s - nd_{1,*}.W_s \notag \displaybreak[0]\\
 X(4).W_b &\equiv nd_{2,4}.tfp + nd_{1,3}.tfp + nd_{0,0}.tfp + nd_{0,0}.W_b \notag \\
 X(4).W_b &\equiv nd_{2,4}.tfp + nd_{1,3}.tfp + nd_{0,2}.tfp + nd_{0,2}.W_b - nd_{1,*}.W_s \notag \\
 X(4).W_b &\equiv nd_{2,4}.tfp + nd_{1,4}.tfp + nd_{0,1}.tfp + nd_{0,1}.W_b - nd_{2,*}.W_s \notag \\
 X(4).W_b &\equiv nd_{2,4}.tfp + nd_{1,4}.tfp + nd_{0,3}.tfp + nd_{0,3}.W_b - nd_{2,*}.W_s - nd_{1,*}.W_s. \notag 
\end{align}
All weights on base internal to the butterfly have been eliminated by repeated substitution. All weight strides for nodes in the same row are congruent due to Property \ref{prop:congruentweightstrides}. It is instructive to trace all 16 paths from an input operand to an output value for a size-$4$ butterfly and verify that the weight on base computation for that path is included in Equation \ref{eq:arbitrarysize4butterfly}. 

\subsubsection{Partitioning Using Original Butterflies}

At the top of a flowgraph, all $a_j$ have a weight of 1, $\omega^0_n$. Butterflies that have input values which are some of these original $a_j$ are called {\bf original butterflies}. Analysis of original butterflies can exploit this known weight on $a_j$ to partition the FFT flowgraph and hence the SMT model.

\begin{prop}
The weight stride for all nodes in any butterfly that includes only nodes belonging to $f$ mod $x^*-1$, $x^*+1$, $x^*-i$ and $x^*+i$ from the polynomial factor tree is congruent to 0 (mod $\nicefrac{n}{4}$). This follows from the weight stride relationship to the polynomial view established in Section \ref{sec:polyview}.
\label{prop:originalbutterflies}
\end{prop}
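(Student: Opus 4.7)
The plan is to reduce the property to the explicit identification of the constant term in each of the four admissible polynomial factors, and then invoke the correspondence between the polynomial view and node labels that was already established in Section \ref{sec:polyview}. Recall from that correspondence the key fact that for every node contained in the highlighted region of a factor $f \bmod (x^{q} - \omega_n^c)$, the flowgraph invariant $nd.W_s$ equals $c$. So once we know that each admissible factor can be written as $x^{q} - \omega_n^c$ with $c$ a multiple of $\nicefrac{n}{4}$, the claim drops out immediately.

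First I would rewrite each of the four admissible factor forms in canonical shape $x^{q} - \omega_n^c$ by using the specific values $\omega_n^0 = 1$, $\omega_n^{n/2} = -1$, $\omega_n^{n/4} = -i$, and $\omega_n^{3n/4} = i$, which follow directly from $\omega_n = e^{-i 2\pi/n}$. This gives
\begin{align*}
x^{q} - 1 &= x^{q} - \omega_n^{0}, &
x^{q} + 1 &= x^{q} - \omega_n^{n/2},\\
x^{q} + i &= x^{q} - \omega_n^{n/4}, &
x^{q} - i &= x^{q} - \omega_n^{3n/4}.
\end{align*}
Second, I would appeal to Section \ref{sec:polyview}: in each of these four cases, the exponent $c \in \{0,\,n/4,\,n/2,\,3n/4\}$ is exactly the $W_s$ label shared by every flowgraph node belonging to the polynomial factor in question. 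All four values are integer multiples of $\nicefrac{n}{4}$, so any node living inside one of these four factors already satisfies $nd.W_s \equiv 0\,(\text{mod}\,\nicefrac{n}{4})$.

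Finally, since the butterfly is assumed to consist only of nodes belonging to factors of these four forms, the conclusion follows node-by-node without any further combinatorial argument: the statement is really a statement about individual nodes, and the ``butterfly'' wrapper adds nothing beyond ranging over those nodes.

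The one place that requires care, and which I would regard as the main obstacle, is verifying that the correspondence claimed in Section \ref{sec:polyview} really does apply uniformly to every node in such a butterfly, including nodes at the lower rows of the butterfly that sit inside \emph{descendant} factors in the lattice. The subtlety is that an ``original'' butterfly near the top of the flowgraph spans only factors whose own ancestors in the factor tree are themselves of one of the four admissible forms, so every ancestor contributes a constant-term exponent that is a multiple of $\nicefrac{n}{4}$, and no descendant factor outside the admissible set is ever touched. Once that closure under ancestry is explicitly stated, the rest of the argument is a direct calculation.
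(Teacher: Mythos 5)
Your proposal is correct and follows essentially the same route as the paper, which treats this as an immediate consequence of the Section~\ref{sec:polyview} correspondence (nodes in the factor $x^q-\omega_n^c$ carry $W_s=c$); you merely make the constant-term exponents $c\in\{0,\nicefrac{n}{4},\nicefrac{n}{2},\nicefrac{3n}{4}\}$ explicit. Your closing worry about ``closure under ancestry'' is unnecessary, since the hypothesis already restricts every node of the butterfly to the four admissible factor forms.
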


\begin{example}
Consider the size-$4$ original butterfly $bf(4,2,0,4)$ from Figure \ref{fig:conjugatesplitradixfft}. By inspection, $W_s$ for all nodes in this butterfly \{0,4,8,12\} is congruent to 0 (mod 4). All size-$4$ original butterflies, and some larger, exhibit Property \ref{prop:originalbutterflies}.
\end{example}

\begin{theorem}
 For any arbitrary size-$4$ original butterfly, $bf_1$, there exists another size-$4$ butterfly, $bf_2$, which has zero-cost twiddle factors for nodes in rows 0 and 1, such that all realizable final weighted sums $X(k)$ of $bf_1$ can be realized by $bf_2$. Furthermore, no $bf_1$ exists with lower FLOP count than this $bf_2$.
\label{thm:topbutterflies}
\end{theorem}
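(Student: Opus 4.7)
The plan is to mirror the two-part structure of the proof of Theorem \ref{thm:sharedtwiddlefactors}: first I would exhibit a valid $bf_2$ by explicit construction, and then use an SMT-based miter to show that no $bf_1$ can strictly undercut that $bf_2$ in total FLOP count. Throughout I assume the shared-twiddle reduction already justified by Theorem \ref{thm:sharedtwiddlefactors} and the cost-symmetry quotient of Section \ref{sec:costsymmetries}, so every node carries a single $tfp$ whose cost depends only on its class modulo $\nicefrac{n}{4}$.

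For the existence part, the key leverage is Property \ref{prop:originalbutterflies}: in any size-$4$ original butterfly, every row weight stride satisfies $nd_{r,*}.W_s \equiv 0 \mdf$. Reading Equation \ref{eq:arbitrarysize4butterfly} modulo $\nicefrac{n}{4}$, the $W_s$ terms therefore vanish, and the $\psi_n$ content on any input-to-output path is determined entirely by the single row-$2$ twiddle factor on that path whenever the row-$0$ and row-$1$ twiddle factors are multiples of $\nicefrac{n}{4}$. I construct $bf_2$ concretely by choosing the internal weights on base: since $nd_{0,c}.W_b = 0$ for an original butterfly, Definition \ref{defn:graphtwiddlefactors} reduces $nd_{0,c}.tfp$ to $nd_{1,c}.W_b$, so setting $nd_{1,c}.W_b \equiv 0 \mdf$ makes every row-$0$ twiddle free; setting $nd_{2,c}.W_b \equiv nd_{1,c}.W_b \mdf$ then makes every row-$1$ twiddle free as well. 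The residual $\psi_n$ content required by the DFT is absorbed into the four row-$2$ twiddle factors, and compatibility of the resulting $16$ path sums follows from Theorem \ref{thm:weightstride}.

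For the optimality part, I set up an SMT miter following the template of Theorem \ref{thm:sharedtwiddlefactors}. One side models an arbitrary valid size-$4$ original $bf_1$, with $nd_{0,c}.W_b = 0$ and $nd_{r,*}.W_s \equiv 0 \mdf$ fixed but all internal $W_b$'s and row weight strides free; the other side models the constructed $bf_2$, with $nd_{0,c}.tfp \equiv nd_{1,c}.tfp \equiv 0 \mdf$ fixed and row-$2$ twiddle factors determined by the common output-weight targets. FLOP counts are tallied on both sides using the balanced-adder-tree encoding from Section \ref{sec:smtformulation}, and I query whether $bf_1$ can achieve a strictly smaller total count. Unsatisfiability of the query establishes the claim for the given $n$; as in Theorem \ref{thm:sharedtwiddlefactors}, the proof may be discharged once per target $n$ or extended inductively.

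I expect the main obstacle to be the bookkeeping in the existence step: although each internal $W_b$ choice is individually easy, one must verify that the compensating quadrant rotations forced into the row-$0$ and row-$1$ twiddle factors indeed remain in $\{0, \nicefrac{n}{4}, \nicefrac{n}{2}, \nicefrac{3n}{4}\}$ and that the four resulting row-$2$ twiddle factors simultaneously realize the $16$ target output weights prescribed by Equation \ref{eq:arbitrarysize4butterfly}. A secondary hurdle is keeping the SMT miter compact; this requires imposing the shared-twiddle and cost-symmetry constraints on both sides so that bitvector widths and the cost-adder-tree depth stay small enough for a solver such as Boolector to close the query quickly.
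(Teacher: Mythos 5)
Your proposal is correct and follows essentially the same route as the paper's proof: part one pushes all cost to the bottom row by exploiting the original-butterfly facts ($nd_{0,*}.W_b=0$ and all row weight strides $\equiv 0 \mdf$, per Property \ref{prop:originalbutterflies}) so rows 0 and 1 carry only free quadrant rotations, exactly as the paper does by substituting zeros into Equation \ref{eq:arbitrarysize4butterfly}; part two is the same SMT miter as the paper's, constraining $bf_1$ and $bf_2$ to realize equal final weights on base and asking for a strictly cheaper $bf_1$, with unsatisfiability closing the argument per size $n$ (or inductively).
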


\begin{proof}
The proof is in two parts. First, to prove all realizable final weighted sums of $bf_1$ can be achieved by $bf_2$, we substitute 0 for all initial weights ($nd_{0,*}.W_b = 0$), for all twiddle factors in rows 0 and 1 ($nd_{0,*}.tfp=nd_{1,*}.tfp=0$) and for all weight strides ($nd_{1,*}.W_s=nd_{2,*}.W_s=0$), into the expressions from Equation \ref{eq:arbitrarysize4butterfly}:
\begin{align*}
 X(0).W_b &\equiv nd_{2,0}.tfp + 0 + 0 + 0 \\
 X(0).W_b &\equiv nd_{2,0}.tfp + 0 + 0 + 0 - 0 \\
 X(0).W_b &\equiv nd_{2,0}.tfp + 0 + 0 + 0 - 0 \\
 X(0).W_b &\equiv nd_{2,0}.tfp + 0 + 0 + 0 - 0 - 0 \displaybreak[0]\\
 X(2).W_b &\equiv nd_{2,1}.tfp + 0 + 0 + 0 \\
 X(2).W_b &\equiv nd_{2,1}.tfp + 0 + 0 + 0 - 0 \\
 X(2).W_b &\equiv nd_{2,1}.tfp + 0 + 0 + 0 - 0 \\
 X(2).W_b &\equiv nd_{2,1}.tfp + 0 + 0 + 0 - 0 - 0 \displaybreak[0]\\
 X(1).W_b &\equiv nd_{2,3}.tfp + 0 + 0 + 0 \\
 X(1).W_b &\equiv nd_{2,3}.tfp + 0 + 0 + 0 - 0 \\
 X(1).W_b &\equiv nd_{2,3}.tfp + 0 + 0 + 0 - 0 \\
 X(1).W_b &\equiv nd_{2,3}.tfp + 0 + 0 + 0 - 0 - 0 \displaybreak[0]\\
 X(4).W_b &\equiv nd_{2,4}.tfp + 0 + 0 + 0 \\
 X(4).W_b &\equiv nd_{2,4}.tfp + 0 + 0 + 0 - 0 \\
 X(4).W_b &\equiv nd_{2,4}.tfp + 0 + 0 + 0 - 0 \\
 X(4).W_b &\equiv nd_{2,4}.tfp + 0 + 0 + 0 - 0 - 0 
\end{align*}
By direct inspection we establish that any final weight on base can be realized by twiddle factors of nodes only in the last row. 

Second, exhaustive search with SMT is used to prove that no $bf_1$ exists with lower FLOP count than its $bf_2$. The SMT proof is a miter that includes $bf_1$ and $bf_2$. The $bf_2$ side of the miter is a direct translation to SMT of the final weight on base computations just seen. The $bf_1$ side of the miter is created by substituting 0 for all initial weights ($nd_{0,*}.W_b = 0$) and for all weight strides ($nd_{1,*}.W_s=nd_{2,*}.W_s=0$) in the expressions from Equation \ref{eq:arbitrarysize4butterfly}. Final weights $bf_1.X(k).W_b$ are required to be equivalent to corresponding final weights $bf_2.X(k).W_b$. FLOP counts for $bf_1$ and $bf_2$ are individually and explicitly tallied within the SMT model. The question posed to the SMT solver is to find a $bf_1$ with lower FLOP count than $bf_2$. The theorem is proved for some $n$ if the SMT solver returns unsatisfiable. The proof can be run once for every size-$n$ FFT under consideration, or induction can be used to establish the result for $n+1$ and higher. 
\end{proof}

This theorem appears to conflict with decimation-in-time FFTs, such as shown in Figure \ref{fig:conjugatesplitradixfft}, where costly twiddle factors appear in the first two rows of the FFT. Consider the size-$4$ butterfly $bf(4,2,0,4)$ from Figure \ref{fig:conjugatesplitradixfft}. There is multiplication cost at internal nodes $nd(8,2,8)$ and $nd(8,6,8)$. But the twiddle factor, $nd(8,2,8).\omega^2_{16}$ can be factored out and pushed down to the children nodes $nd(4,2,4)$ and $nd(4,2,12)$. Likewise, an $\omega^2_{16}$ must also be factored out of $nd(8,6,8)$ to maintain algebraic correctness. After factoring out the $\omega^2_{16}$, all multiplication cost occurs on the bottom row of $bf(4,2,0,4)$ and the total cost remains 24 FLOPs. Globally, there is now no size-$4$ original butterfly with cost in the first two rows.

Because of Theorem \ref{thm:topbutterflies} and the recursive structure of the FFT, we can now partition the FFT flowgraph when solving for minimum total arithmetic complexity. In general, we must solve for all FFTs corresponding to $f$ mod $x^{*}-i$ and $x^{*}+i$ branches in the factor tree. For a size-$n$ FFT, this requires solving SMT models for pairs of size-$p$ butterflies, for all $p$ from 1 up to $\nicefrac{n}{4}$. In practice, for values of $p=8$ the problem becomes trivial and is used as the terminal case of partitioning.  The most difficult partition of a size-$n$ FFT flowgraph, a size-$\frac{n}{4}$ butterfly, will have a solution space of $2^{\frac{n}{8} \log_2 \frac{n}{4} ((\log_2 n)-2)}$. In more concrete terms, the largest SMT models required to solve a size-256 flowgraph are for two size-64 butterflies corresponding to the $f$ mod $x^{64}-i$ and $x^{64}+i$ branches of the factor tree. One of these size-64 butterflies has a solution space of $2^{1152}$. 

Computation time for proving that a partitioned size-64 FFT flowgraph has no solution with total cost equal to or less than 1159 FLOPs is now 2.8 seconds. Our timeout of 24 hours is reached when attempting to prove that a size-128 FFT flowgraph has no solution with total cost equal to or less than 2824 FLOPs.  

\subsubsection{Partitioning Using Terminal Butterflies}

At the bottom of a flowgraph, the weight on base required for each final result $X(k)$ is known. This enables analysis of {\bf terminal butterflies}, or butterflies producing some final values of $X(k)$, so that the model may be further partitioned.

\begin{theorem}
 For any arbitrary size-$4$ terminal butterfly, $bf_1$, there exists another size-$4$ butterfly, $bf_2$, which has zero-cost twiddle factors for nodes in rows 1 and 2, such that all realizable final weighted sums $X(k)$ of $bf_1$ can be realized by $bf_2$. Furthermore, no $bf_1$ exists with lower FLOP count than this $bf_2$.
\label{thm:bottombutterflies}
\end{theorem}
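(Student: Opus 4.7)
The plan is to mirror the proof of Theorem \ref{thm:topbutterflies} but to exploit the \emph{dual} boundary condition. Whereas that theorem used the fact that all initial input weights are $\omega_n^0$, here I would use the fact, recorded in Definition \ref{defn:graphweightonbase}, that every output satisfies $X(k).W_b = 0$. The construction of $bf_2$ amounts to pushing all nontrivial twiddle factors out of rows 1 and 2 and into row 0.

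For the first part, I would substitute $X(k).W_b = 0$ for all four outputs together with the zero-cost hypothesis $nd_{1,*}.tfp = nd_{2,*}.tfp = 0$ into Equation \ref{eq:arbitrarysize4butterfly}. Each of the sixteen path equations then collapses to a relation of the form
\begin{equation*}
 0 \equiv nd_{0,c}.tfp + nd_{0,c}.W_b - \alpha\, nd_{1,*}.W_s - \beta\, nd_{2,*}.W_s \mdf
\end{equation*}
with $\alpha,\beta \in \{0,1\}$ determined by the path. The four equations arising from $X(0)$ and the four arising from $X(2)$ become identical, because the zeroed-out $nd_{2,0}.tfp$ and $nd_{2,1}.tfp$ were the only distinguishing terms; the same collapse identifies the $X(1)$ equations with the $X(4)$ equations. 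Finally, Property \ref{prop:congruentweightstrides} makes $nd_{1,*}.W_s$ and $nd_{2,*}.W_s$ well defined $\mdf$, so the $X(0)/X(2)$ group and the $X(1)/X(4)$ group also agree. What remains is a consistent system of four equations in the four unknowns $nd_{0,0}.tfp,\,nd_{0,1}.tfp,\,nd_{0,2}.tfp,\,nd_{0,3}.tfp$, which can be solved in closed form, yielding $bf_2$.

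For the second part, I would build an SMT miter in direct analogy with the proof of Theorem \ref{thm:topbutterflies}. The $bf_1$ side encodes an arbitrary size-4 terminal butterfly using the machinery of Section \ref{sec:smtformulation}, together with the extra constraint $X(k).W_b = 0$ on each output. The $bf_2$ side encodes the same four final weights on base but forces every row-1 and row-2 twiddle factor power to be a multiple of $\nicefrac{n}{4}$ (so that its cost is zero). Two FLOP totals are tallied separately inside the model, and the solver is asked whether there exists a $bf_1$ whose total is strictly less than that of its matching $bf_2$. Unsatisfiability of this query establishes the claim, and as in Theorem \ref{thm:topbutterflies} the check can either be repeated for each size $n$ or extended by induction.

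The main obstacle is the consistency check in the first step. Terminal butterflies, in contrast to the original butterflies handled by Theorem \ref{thm:topbutterflies}, are not covered by Property \ref{prop:originalbutterflies}, so the weight strides $nd_{1,*}.W_s$ and $nd_{2,*}.W_s$ generally survive the reduction instead of vanishing. Checking that different row-1 and row-2 nodes along the distinct sub-butterflies nevertheless contribute the same stride $\mdf$, and that the $\alpha,\beta$ coefficients line up across all sixteen paths, requires a careful accounting via Property \ref{prop:congruentweightstrides} together with the left/right parent conventions from Property \ref{prop:orderedbases}.
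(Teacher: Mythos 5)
Your proposal matches the paper's proof in essentials: the first part is the same substitution of $X(k).W_b=0$ and $nd_{1,*}.tfp=nd_{2,*}.tfp=0$ into Equation \ref{eq:arbitrarysize4butterfly} (the surviving strides $nd_{1,*}.W_s$, $nd_{2,*}.W_s$ are simply absorbed into the row-0 twiddle factors, exactly as in the paper's reordered display, so the ``consistency'' worry resolves immediately since the four path equations per input column coincide), and the second part is the same SMT miter with shared inputs and strides, queried for a $bf_1$ cheaper than its $bf_2$ and proved unsatisfiable. This is essentially the paper's argument, with only presentational differences.
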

\begin{proof}
The proof is in two parts. First, to prove all realizable final weighted sums of $bf_1$ can be achieved by $bf_2$, we substitute 0 for all final weights ($X(k).W_b = 0$) and for all twiddle factors in rows 1 and 2 ($nd_{1,*}.tfp=nd_{2,*}.tfp=0$) into the expressions from Equation \ref{eq:arbitrarysize4butterfly}: 
\begin{align*}
 X(0).W_b = 0 &\equiv 0 + 0 + nd_{0,0}.tfp + nd_{0,0}.W_b \\
 X(2).W_b = 0 &\equiv 0 + 0 + nd_{0,0}.tfp + nd_{0,0}.W_b \\
 X(1).W_b = 0 &\equiv 0 + 0 + nd_{0,0}.tfp + nd_{0,0}.W_b \\
 X(4).W_b = 0 &\equiv 0 + 0 + nd_{0,0}.tfp + nd_{0,0}.W_b \displaybreak[0]\\
 X(0).W_b = 0 &\equiv 0 + 0 + nd_{0,2}.tfp + nd_{0,2}.W_b - nd_{1,*}.W_s \\
 X(2).W_b = 0 &\equiv 0 + 0 + nd_{0,2}.tfp + nd_{0,2}.W_b - nd_{1,*}.W_s \\
 X(1).W_b = 0 &\equiv 0 + 0 + nd_{0,2}.tfp + nd_{0,2}.W_b - nd_{1,*}.W_s \\
 X(4).W_b = 0 &\equiv 0 + 0 + nd_{0,2}.tfp + nd_{0,2}.W_b - nd_{1,*}.W_s \displaybreak[0]\\
 X(0).W_b = 0 &\equiv 0 + 0 + nd_{0,1}.tfp + nd_{0,1}.W_b - nd_{2,*}.W_s \\
 X(2).W_b = 0 &\equiv 0 + 0 + nd_{0,1}.tfp + nd_{0,1}.W_b - nd_{2,*}.W_s \\
 X(1).W_b = 0 &\equiv 0 + 0 + nd_{0,1}.tfp + nd_{0,1}.W_b - nd_{2,*}.W_s \\
 X(4).W_b = 0 &\equiv 0 + 0 + nd_{0,1}.tfp + nd_{0,1}.W_b - nd_{2,*}.W_s \displaybreak[0]\\
 X(0).W_b = 0 &\equiv 0 + 0 + nd_{0,3}.tfp + nd_{0,3}.W_b - nd_{2,*}.W_s - nd_{1,*}.W_s \\
 X(2).W_b = 0 &\equiv 0 + 0 + nd_{0,3}.tfp + nd_{0,3}.W_b - nd_{2,*}.W_s - nd_{1,*}.W_s \\
 X(1).W_b = 0 &\equiv 0 + 0 + nd_{0,3}.tfp + nd_{0,3}.W_b - nd_{2,*}.W_s - nd_{1,*}.W_s \\
 X(4).W_b = 0 &\equiv 0 + 0 + nd_{0,3}.tfp + nd_{0,3}.W_b - nd_{2,*}.W_s - nd_{1,*}.W_s 
\end{align*}
Rows have been reordered to group common twiddle factors. By direct inspection we establish that a final weight on base of 0 for all $X(k)$ can be realized by twiddle factors of nodes only in the first row. 

Second, exhaustive search with SMT is used to prove that no $bf_1$ exists with lower FLOP count that its $bf_2$. The SMT proof is a miter that includes $bf_1$ and $bf_2$. The $bf_2$ side of the miter is a direct translation to SMT of the final weight on base computations just seen. The $bf_1$ side of the miter is created by substituting 0 for all final weights ($X(k).W_b = 0$) in the expressions from Equation \ref{eq:arbitrarysize4butterfly}. Input values $nd_{0,*}.W_b$ and row weight strides, $nd_{1,*}.W_s,nd_{2,*}.W_s$, are common to $bf_1$ and $bf_2$. FLOP counts for $bf_1$ and $bf_2$ are individually and explicitly tallied within the SMT model. The question posed to the SMT solver is to find a $bf_1$ with lower FLOP count than $bf_2$. The theorem is proved for some $n$ if the SMT solver returns unsatisfiable. The proof can be run once for every size-$n$ FFT under consideration, or induction can be used to establish the result for $n+1$ and higher. 
\end{proof}
This theorem appears to conflict with decimation-in-frequency FFT algorithms, such as shown in Figure \ref{fig:radix2fft}, where costly twiddle factors appear in the last two rows of the FFT flowgraph. Consider the size-$4$ butterfly $bf(1,0,1,4)$ from Figure \ref{fig:radix2fft}. There is multiplication cost at internal nodes $nd(2,1,2)$ and $nd(2,1,6)$. But the twiddle factor, $nd(2,1,2).\omega^1_{8}$ can be distributed and pushed up to the parent nodes $nd(4,1,4)$ and $nd(4,3,4)$. Likewise, an $\omega^1_{8}$ must also be factored out of $nd(2,1,6)$ to maintain algebraic correctness. Now all multiplication cost occurs in the top row of $bf(1,0,1,4)$ and the total cost remains the same. Globally, there is now no size-$4$ terminal butterfly with cost in the last two rows. Note that for this small size-$8$ FFT, this new configuration of twiddle factors now fails conditions for partitioning by original butterflies as costly twiddle factors now occur in the top two rows. For this reason, combined original and terminal partitioning is only applicable to size-$16$ and larger FFT flowgraphs.

By Theorem \ref{thm:bottombutterflies} and the recursive structure of the FFT, we can now further partition the FFT flowgraph when solving for minimum FLOP count. In general, we must solve for all FFTs corresponding to $f$ mod $x^{*}-i$ and $x^{*}+i$ branches in the factor tree, but now each branch can be partitioned into four smaller equally sized FFTs. For a size-$n$ FFT, this requires solving SMT models for groups of 8 size-$p$ butterflies for all $p$ from 1 up to $\frac{n}{16}$. In practice, for values of $p=8$ the problem becomes trivial and that is used as the terminal case of partitioning.  The most difficult partition of a size-$n$ FFT flowgraph, a size-$\frac{n}{16}$ butterfly, will have a solution space of $2^{\frac{n}{32} \log_2 \frac{n}{16} ((\log_2 n)-2)}$. In concrete terms, the largest SMT models required to solve a size-256 flowgraph are eight size-16 butterflies corresponding to the $f$ mod $x^{64}-i$ and $x^{64}+i$ branches of the factor tree. One of these size-16 butterflies has a solution space of $2^{192}$. 

We can now prove the surprising result that size-256 FFTs exists which require only 6616 FLOPs, rather than the 6664 FLOPs required by the traditional split-radix, even when twiddle factors are of modulus one. Finding a 6616 FLOP algorithm requires 22 seconds to compute when the lowest cost constraint is used for each partition. Just over 5 seconds is required for the toughest size-$16$ partition. Of course, searching for the lowest cost in a partition requires repeated SMT runs and consequently the total search time is higher. To prove that no solution exists with FLOP count lower than 6616 requires 160 seconds total, with the toughest partition requiring just over 50 seconds.  

\subsection{Symmetry Reductions}

We find that there are many FFTs with equivalent final FLOP count yet with different twiddle factor values. Prior work in twisting\cite{bernstein2007tangent}\cite{2008-mateer} indicates that this should be expected. In this section, we highlight two types of symmetry reduction that reduce SMT run times. Many local symmetry reduction constraints are possible and we experimented with dozens but found only these two to be of any significance. 

\subsubsection{3-Node Symmetries}

A size-2 butterfly is {\bf3-node symmetric} if 3 of its 4 nodes require 6 FLOPs for multiplication. Symmetries are eliminated by forcing  $nd_{1,0}.tfp$ to have no multiplication cost.

\begin{example}
Consider a concrete computation performed by a size-$2$ butterfly from the size-$32$ FFT flowgraph expressed as
\begin{align*}
 X(0) &\equiv \psi_{32}^0(a_0\psi_{32}^1 + a_1\psi_{32}^3) \mdf\\
 X(1) &\equiv \psi_{32}^7(a_0\psi_{32}^1 + a_1\psi_{32}^3) \mdf.
\end{align*}
This butterfly requires $18=6+6+6$ FLOPs for multiplication. If the $\psi_{32}^3$ is factored out to ``zero'' the weight on $a_1$ and shared twiddle factors are preserved, these equations can be expressed as
\begin{align*}
 X(0) &\equiv \psi_{32}^3(a_0\psi_{32}^6 + a_1\psi_{32}^0) \mdf \\
 X(1) &\equiv \psi_{32}^2(a_0\psi_{32}^6 + a_1\psi_{32}^0) \mdf,
\end{align*}
with total multiplication cost of $18=6+6+6$ FLOPs again. Alternatively, if the $\psi_{32}^1$ is factored out to ``zero'' the weight on $a_0$, these equations can be expressed as 
\begin{align*}
 X(0) &\equiv \psi_{32}^1(a_0\psi_{32}^0 + a_1\psi_{32}^2) \mdf \\
 X(1) &\equiv \psi_{32}^0(a_0\psi_{32}^0 + a_1\psi_{32}^2) \mdf,
\end{align*}
with total multiplication cost of $12=6+6$ FLOPs. For the values in this example we find a cost benefit from factoring out the $\psi_{32}^1$. We must be pessimistic and assume the worse case, $18=6+6+6$ FLOPs, since only one weight is guaranteed zero-cost. It is also possible to ``zero'' the weight on the $X(1)$ sum by distributing the $\psi_{32}^7$ and achieve the same 12 FLOP configuration.
\label{exmp:3-node symmetric} 
\end{example}

In the SMT model, 3-node symmetric size-2 butterflies are detected and only those with zero multiplication cost for $nd_{1,0}$ are allowed. This is built by defining the following illegal condition,
\begin{equation*}
 nd_{1,0}.c6 \land ((nd_{0,0}.c6 \land nd_{0,1}.c6) \lor (nd_{0,0}.c6 \land nd_{1,1}.c6) \lor (nd_{0,1}.c6 \land nd_{1,1}.c6)), 
\end{equation*}
for each size-2 butterfly and then requiring the inverse be satisfied in the SMT model. 

We have verified with SMT-based proofs like those seen previously that this constraint doesn't increase the butterfly's FLOP count. As in the example, it may lead to a lower FLOP count if some node other than $nd_{1,0}$ has an applied weight of zero. We have formulated more complex constraints to detect these better cases early but found negligible speed-up in SMT runs. Instead, we rely on the cost-constraint described in Section \ref{sec:smtformulation} to eventually eliminate bad choices. Finally, if the SMT solver happens to choose the better placement of zero applied weight to begin with, the node is not 3-node symmetric (multiplication cost is less than 16 FLOPs) and no 3-node symmetric constraint will apply.
  
\subsubsection{Bottom Equal-Pair Symmetries}

A size-2 butterfly has {\bf equal-pair symmetries} if nodes $nd_{1,0}$ and $nd_{1,1}$ have multiplication cost and equal twiddle factors, $nd_{1,0}.tfp=nd_{1,1}.tfp$. This symmetry is eliminated by requiring that these identical twiddle factors in row 1 be distributed to row 0 nodes of the butterfly.

\begin{example}
Consider a concrete computation performed by a size-$2$ butterfly from the size-$32$ FFT flowgraph expressed as
\begin{align*}
 X(0) &\equiv \psi_{32}^3(a_0\psi_{32}^0 + a_1\psi_{32}^0) \mdf \\
 X(1) &\equiv \psi_{32}^3(a_0\psi_{32}^0 + a_1\psi_{32}^0) \mdf.
\end{align*}
This butterfly requires $12=6+6$ FLOPs for multiplication. If the $\psi_{32}^3$ is distributed, these equations can be expressed as
\begin{align*}
 X(0) &\equiv \psi_{32}^0(a_0\psi_{32}^3 + a_1\psi_{32}^3) \mdf \\
 X(1) &\equiv \psi_{32}^0(a_0\psi_{32}^3 + a_1\psi_{32}^3) \mdf,
\end{align*}
with total multiplication cost of $12=6+6$ FLOPs again. Another example with initial multiplication cost in row 0 is
\begin{align*}
 X(0) &\equiv \psi_{32}^3(a_0\psi_{32}^2 + a_1\psi_{32}^0) \mdf \\
 X(1) &\equiv \psi_{32}^3(a_0\psi_{32}^2 + a_1\psi_{32}^0) \mdf,
\end{align*}
with total multiplication cost of $18=6+6+6$ FLOPs. After distributing the $\psi_{32}^3$, this becomes
\begin{align*}
 X(0) &\equiv \psi_{32}^0(a_0\psi_{32}^5 + a_1\psi_{32}^3) \\
 X(1) &\equiv \psi_{32}^0(a_0\psi_{32}^5 + a_1\psi_{32}^3),
\end{align*}
with lower multiplication cost of $12=6+6$ FLOPs. For the values in this example we find a benefit but note that the final FLOP count is never worse than the initial as proved with SMT.
\label{exmp:bottom equal-pair symmetric} 
\end{example}

In the SMT model, bottom equal-pair symmetric butterflies are not allowed. This is built by defining the following illegal condition,
\begin{equation*}
 (\lnot nd_{1,0}.c0) \land (nd_{1,0}.tfp = nd_{1,1}.tfp),
\end{equation*}
for each size-2 butterfly and then requiring the inverse be satisfied in the SMT model. 

We have verified with SMT-based proofs like those seen previously that this symmetry reduction doesn't increase the butterfly's FLOP count. A similar constraint for top equal-pair symmetric butterflies can be formulated, and even applied in combination with the bottom equal-pair symmetric constraint with care, but we found negligible speed-up in SMT runs when doing so. 

These two symmetry reduction constraints now bring the total time for finding a 6616 FLOP count solution for a size-256 FFT down to 8 seconds. To prove that no solution exists with less than 6616 FLOPs now requires 50 seconds. It is now possible to find a 15128 FLOP count solution for a size-512 FFT in about 11 hours. We gave up on attempts to find solutions better than 15128 FLOPs after spending more than 14 days. There were four partitions for which we could not prove unsatisfiable when applying a FLOP count constraint of the ``best found less one.'' From experience, we suspect that a 15127 FLOP solution is most likely unsatisfiable given the dramatic increase in SMT solver run times.  

\section{Results and Experiments}
\label{sec:results}

Table \ref{tab:results} summarizes our results for SMT-based search of various size FFT flowgraphs. For size-$256$ FFTs and larger, we see that algorithms with FLOP count lower than the traditional split-radix do exist even when all twiddle factors have modulus one. We also show FLOP counts for the traditional spit-radix and for the tangent FFT\cite{johnson2007modified}\cite{bernstein2007tangent}, where twiddle factors are scaled and hence not modulus one. As expected, the required SMT time quickly becomes intractable as larger FFTs are considered. Yet it is still instructive to consider FFTs of relatively small size as such FFTs appear in larger FFTs. Finally, we do not know the number of FFT algorithms meeting these minimum FLOP count constraints but do know that there are many. We did search for multiple solutions of a size-$256$ FFT flowgraph partition and found hundreds before terminating. These solutions have both different values and placement patterns for costly twiddle factors.

\begin{table}[h]
\centering
\begin{tabular}{| c | c | c | c | l | c | l |}
\cline{2-7}
\multicolumn{1}{c|}{}&
{\bf Tangent} &
{\bf Split-Radix} &
\multicolumn{4}{c|}{\bf SMT Search}\\

\multicolumn{1}{c|}{}&
{\footnotesize$|\omega_n^*|=*$} &
{\footnotesize$|\omega_n^*|=1$} &
\multicolumn{4}{c|}{\footnotesize$|\omega_n^*|=1$}\\

\cline{4-7}
\multicolumn{1}{c|}{}&
&
&
\multicolumn{2}{c|}{Satisfiable} &
\multicolumn{2}{c|}{Unsatisfiable}\\

\hline
{\bf FFT Size} &
{\small FLOPs} &
{\small FLOPs} &
{\small FLOPs} &
{\small time({\it s})} &
{\small FLOPs} &
{\small time({\it s})} \\

\hline

{\bf 32} & 456 & 456 & 456 & $1.4 \times 10^{-1}$ & 455 & $1.5 \times 10^{-1}$ \\
\hline
{\bf 64} & 1152 & 1160 & 1160 & $3.1 \times 10^{-1}$ & 1159 & $3.3 \times 10^{-1}$ \\
\hline
{\bf 128} & 2792 & 2824 & 2824 & $9.3 \times 10^{-1}$ & 2823 & $1.1 \times 10^{0}$ \\
\hline
{\bf 256} & 6552 & 6664 & 6616 & $8.3 \times 10^0$ & 6615 & $5.0 \times 10^1$ \\
\hline
{\bf 512} & 15048 & 15368 & 15128 & $3.9 \times 10^4$ & 15127? & >$1 \times 10^{6}$ \\
\hline
\end{tabular}

\caption{Lowest FLOP Counts Found by SMT Search}
\label{tab:results}

\end{table}

The times reported in Table \ref{tab:results} are for the FLOP bounds at the boundary between satisfiable and unsatisfiable. We search for this boundary using binary search akin to Newton's method. We start with the best known FLOP bound for that size and class of FFT found in the literature, and divide that by 2. If that is satisfiable, we consider that the best known FLOP count and repeat. But if it is unsatisfiable, we choose a new FLOP bound half way between the unsatisfiable FLOP bound and the last known satisfiable bound and repeat. A complete search does require more time than seen in Table \ref{tab:results}, but we find that FLOP counts far away from the boundary are solved relatively fast, whether they are satisfiable or unsatisfiable. Only when the boundary is approached do times increase dramatically. Furthermore, by imposing a timeout, we can skew the search to approach the boundary from the satisfiable side, where FLOP counts are successively becoming lower. This improves overall search performance as proving satisfiable cases is generally less costly than proving unsatisfiable cases.

The reduction in FLOP count of FFTs found by SMT search appears to accelerate for larger $n$ when compared to the tangent FFT. Our size-$256$ solution has an advantage of 48 FLOPs when compared to the traditional split-radix FFT, whereas the tangent FFT has an advantage of 112 FLOPs. At this size, our FFT provides $\nicefrac{48}{112}=0.429$ of the advantage of the tangent FFT. At size-$512$, this advantage is $\nicefrac{240}{320}=0.75$. It is unclear if this approaches the tangent FFT advantage asymptotically, eventually surpasses it, or degrades. We suspect that the opportunities for optimization may be increasingly richer as partition sizes and the number of costly twiddle factors that they contain grow.

\subsection{SMT QF\_BV Solver Experiments}

The results reported so far have all been generated using the SMT solver Boolector\cite{brummayer2009boolector}. In this section, we present results for various SMT solvers, and identify some SMT solver characteristics best suited for our problem.

We use four representative benchmarks for our experiments. The first, \texttt{Sz256\_6616}, is the hardest partition from a size-$256$ flowgraph with a 6616 FLOP bound and is known to be satisfiable. The second, \texttt{Sz256\_6615}, is also the hardest partition from a size-$256$ flowgraph but with a 6615 FLOP bound and is known to be unsatisfiable. Likewise, the third and fourth benchmarks, \texttt{Sz512\_15128} and \texttt{Sz512\_15127}, are the hardest partitions from a size-$512$ flowgraph with 6616 and 6615 FLOP bounds respectively. Only \texttt{Sz512\_15128} is known to be satisfiable. Whether benchmark \texttt{Sz512\_15127} is satisfiable is unknown, but we suspect it is unsatisfiable.

For state-of-the-art SMT solvers, we use the top four SMT solvers in the QF\_BV category, closed quantifier-free formulas over the theory of fixed-size bitvectors, from the SMT-2011 competition\cite{SMTCompetitions}: Z3\cite{Z3}, STP2\cite{STP}, Boolector\cite{brummayer2009boolector} as well as MathSat5\cite{mathsat4} main and application configurations. We include two additional QF\_BV solvers that performed well in earlier competitions: Beaver\cite{jha2009beaver} and Yices\cite{yices}. For the SMT-2011 competition solvers, we used the binary executables and unmodified run scripts from the SMT-2011 competition web site\cite{SMTCompetitions}. For Beaver and Yices, we downloaded the latest available version from the web: Beaver 1.2.0.780 and Yices 2.0, build date of July 29, 2010, for x86\_64-unknown-linux-gnu. Both Beaver and Yices were executed without any additional command line options. We updated our pretty printer to support SMT-LIB 2.0\cite{ranise2006smt} for the four SMT-2011 competition solvers. Beaver and Yices were given SMT-LIB 1.2 input. All experiments were run on a 64-bit Intel Core i7 Linux machine.

Results for our SMT solver experiments are shown in Table \ref{tab:smtsolvers}. We have ordered the results from best to worst performance on benchmark \texttt{Sz256\_6615}, which we consider the most representative as all lowest FLOP searches must end with a proven unsatisfiable case. For this unsatisfiable case, all solvers perform in the same order of magnitude, with the worst performer requiring $2.5 \times$ the amount of time as the best performer. For the satisfiable cases, we see a larger variation in performance due to the rich set of solutions that exist and the chances that a particular solver's search strategy will find one first. All SMT solvers reached the timeout of 24 hours without solving \texttt{Sz512\_15127}.

\begin{table}[h]
\centering
\begin{tabular}{| l | c | c | c | c |}
\cline{2-5}

\multicolumn{1}{c|}{}&
{\bf Sz256\_6616}&
{\bf Sz256\_6615}&
{\bf Sz512\_15128}&
{\bf Sz512\_15127}\\

\cline{1-1}
{\bf Solver}&
{SAT \small time({\it s})}&
{UNSAT \small time({\it s})}&
{SAT \small time({\it s})}&
{Unknown}\\

\hline{Beaver}        & $2.0 \times 10^{0}$ & $7.6 \times 10^{0}$ & $6.2 \times 10^{2}$ & Timeout \\
\hline{STP2}          & $0.5 \times 10^{0}$ & $1.2 \times 10^{1}$ & $8.4 \times 10^{3}$ & Timeout \\
\hline{Boolector}     & $3.4 \times 10^{0}$ & $1.2 \times 10^{1}$ & $2.9 \times 10^{4}$ & Timeout \\
\hline{MathSAT5 app}  & $4.0 \times 10^{0}$ & $1.5 \times 10^{1}$ & $5.7 \times 10^{4}$ & Timeout \\
\hline{MathSAT5 main} & $5.2 \times 10^{0}$ & $1.8 \times 10^{1}$ & $1.1 \times 10^{4}$ & Timeout \\
\hline{Z3}            & $2.8 \times 10^{0}$ & $1.9 \times 10^{1}$ & $2.1 \times 10^{4}$ & Timeout \\
\hline{Yices}         & $4.3 \times 10^{0}$ & $1.9 \times 10^{1}$ & $6.8 \times 10^{4}$ & Timeout \\

\hline
\end{tabular}

\caption{SMT Solver Performance}
\label{tab:smtsolvers}

\end{table}

For Beaver, the best performing SMT solver on \texttt{Sz256\_6615}, we also varied the command line options to test their effect. The most noticeable differences, although minor, came from disabling optimizations. Table \ref{tab:beaver} summarizes our findings. Constant propagation appears to be the most effective for our application. It should prove beneficial to incorporate constant propagation at the high-level when we generate our initial SMT models.

\begin{table}[h]
\centering
\begin{tabular}{| l | c |}
\cline{2-2}

\multicolumn{1}{c|}{}&
{\bf Sz256\_6615}\\

\cline{1-1}
{\bf Solver}&
{UNSAT \small time({\it s})}\\

\hline{Beaver --disable-const} & 9.9 \\
\hline{Beaver (disable all)} & 9.5 \\
\hline{Beaver --disable-commute} & 8.1 \\
\hline{Beaver --disable-assoc} & 7.8 \\
\hline{Beaver --disable-non-linear} & 7.7 \\
\hline{Beaver (enable all)}& 7.6 \\

\hline
\end{tabular}

\caption{Beaver Optimization Options}
\label{tab:beaver}

\end{table}

\subsection{Bit-Blasting and SAT Solver Experiments}

A common trait of the three best SMT solvers for our problem, Beaver, STP2 and Boolector, is that they focus on bitvector problems. All three perform bit-blasting and then use a SAT solver back-end to solve a traditional SAT problem. Furthermore, they pay close attention to the circuit structure when optimizing and bit-blasting. All three incorporate AIGs, And-Invert Graphs\cite{kuehlmann2001circuit}, and rewriting of AIGs. Beaver and STP2 use the ABC\cite{abc} library to facilitate this. Furthermore, Beaver employs the SAT solver nflsat\cite{jain2009efficient}, which operates on AIGs natively, as its default back-end. Since this circuit-centric approach works well for our problems, this section presents experimental data on bit-blasting flows and SAT solver performance when considered separately.

For each of the three best SMT solvers for our problem we implemented four experimental bit-blasting flows. At a high-level, these four flows are:
\begin{enumerate}
 \item SMT solver circuit representation to CNF with ABC
 \item SMT solver circuit representation to CNF with ABC after ABC optimization for SAT
 \item SMT solver circuit representation to CNF with AIGER
 \item SMT solver circuit representation to CNF with AIGER after ABC optimization for SAT
\end{enumerate}

Since each SMT solver's native circuit representation is slightly different, we first standardized all circuit representations to AIGs. Beaver incorporates ABC and can generate AIGs natively. We generated an AIG using the command line options \texttt{beaver --no-solve --aig --aig-file=<file.aig> <file.smt>}. STP2 also incorporates ABC but has no working command line option to generate an AIG file. Since STP2 is distributed as source, we were able to add an option to generate an AIG output file of it's internal circuit representation. Boolector has an option to dump expressions in BTOR format. We generated BTOR with that option and converted it to AIG using \texttt{synthebtor -m <file.btor> <file.aig>}, which is a tool provided with the Boolector package. Thus, the bit-blasted representation from all three solvers are standardized as AIGs.

In flows 1 and 2, CNF is generated from AIG by ABC using \texttt{write\_cnf}. In flows 3 and 4, CNF is generated from AIG by using the tool \texttt{aigtocnf}, which is part of the AIGER package\cite{aiger}. In flows 2 and 4, the ABC optimization for SAT command \texttt{drwsat} is executed 3 times to generate a simplified AIG. 

We selected 7 SAT solvers that are readily available and either have performed well in recent SAT competitions\cite{SATCompetitions}\cite{SATRace} or are used in the back-end for Beaver, STP2 or Boolector already.
 
\begin{itemize}
 \item glueminisat 2.2.5\cite{glueminisat}
 \item simplifying minisat 2.2.0\cite{minisat}
 \item cryptominisat 2.9.0\cite{cryptominisat}
 \item precosat 570\cite{precosat}
 \item lingeling 276\cite{precosat}
 \item clasp 2.0.2\cite{clasp}
 \item nflsat 05102009\cite{jain2009efficient}
\end{itemize}

\begin{figure}[t!]
\centering
\includegraphics{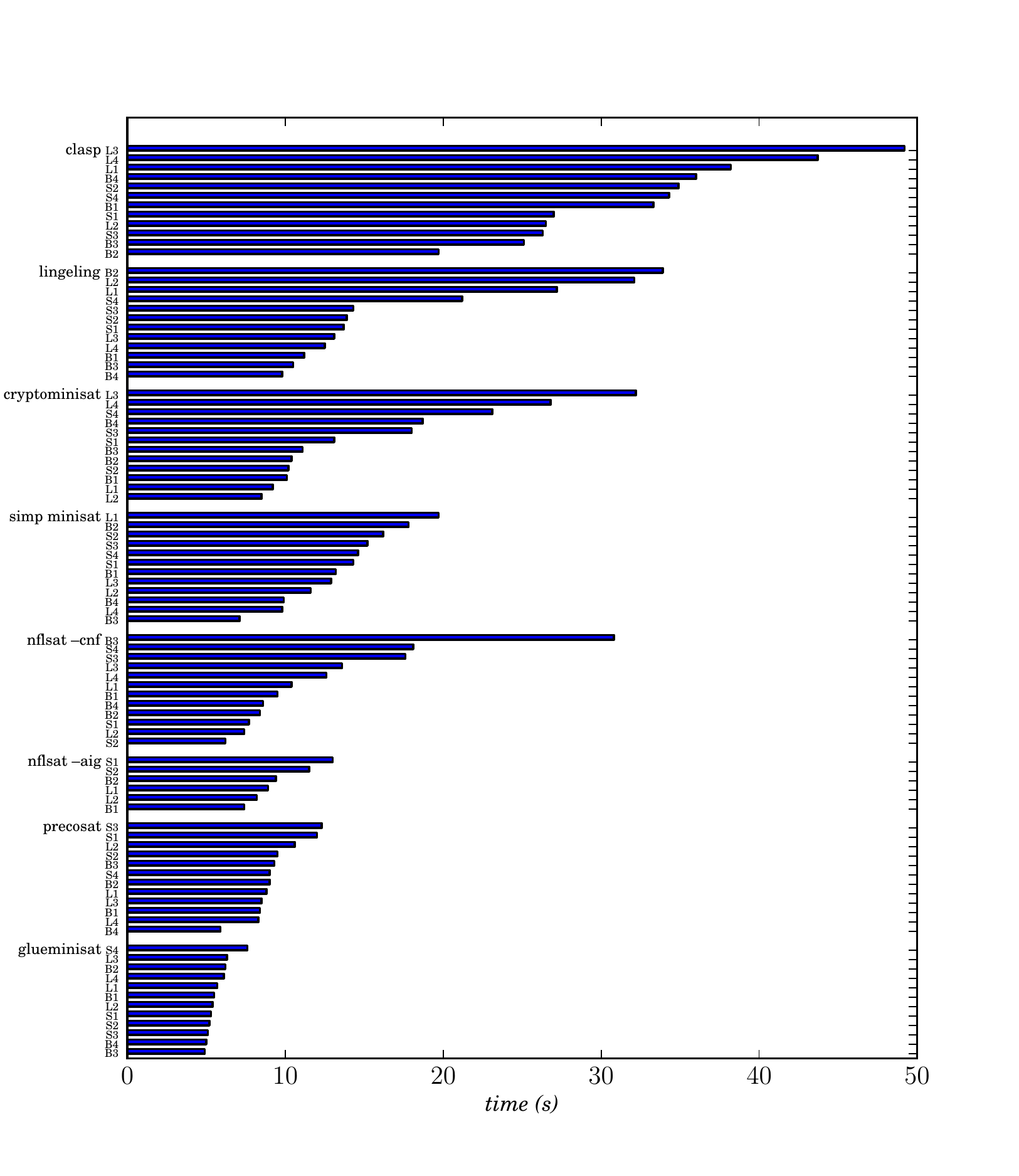}
\caption{SAT Solver Performance for Various Bit-Blasting Flows}
\label{fig:satsolverperformance}
\end{figure}

Figure \ref{fig:satsolverperformance} shows the results for each of the seven SAT solvers with all 12 bit-blasting flows. The bit-blasting flows are keyed to the SMT front-end, (B=Beaver, L=Boolector, S=STP2), along with the route to CNF, 1-4. Results are grouped by SAT solver, and ordered from worst (top) to best (bottom) average performance. Within a SAT solver grouping, results are ordered again from worst to best performance. The reported times include format conversion and ABC optimization, if applicable. 

From these results, we see that the SAT solver glueminisat, the winner in the SAT 2011 competition for the UNSAT application track, consistently performs the best. Overall, the back-end SAT solver choice is more significant than the SMT bit-blasting tool and/or path to CNF. Still, Beaver bit-blasting appears to provide a slight second-order advantage.  It is also interesting to note that the SAT solver clasp, the winner in the SAT 2011 competition for the UNSAT crafted track, performed the worst. Furthermore, Beaver's default choice of a back-end SAT solver, nflsat with native AIG input, does not distinguish itself. 

For our problem, the data suggests that bit-blasting by Beaver with conversion to CNF by AIGER (flow B3) for input to the SAT solver glueminisat is the best choice. We applied this flow to the unknown problem \texttt{Sz512\_15127} but were still unable to produce a result after days of compute time. From this we conclude that the greatest advances in solving our particular problem will come from high-level insight, such as additional problem partitioning and identification of new problem-specific constraints. Next, the choice of the underlying SAT technology will have some beneficial effect as SAT technology continues to improve. And finally, how we cast our problem as SAT, including initial specification, constant propagation, choice of bit-blasting and conversion to CNF, can be adjusted for further second-order improvements.

\subsection{SMT QF\_LIA Solver Experiments}

\label{sec:liasolver}

If is unclear whether modeling our problem as QF\_BV is the best choice. It is possible to model bitvector problems with other logics\cite{somenzi2011selective}\cite{zeng2001lpsat}\cite{brinkmann2002rtl}\cite{bozzano2006encoding}. In this section, we present a first attempt to model our problem as QF\_LIA, following the techniques of Kim and Somenzi\cite{somenzi2011selective}.

In their recent paper\cite{somenzi2011selective}, Kim and Somenzi showed that some QF\_BV problems could be cast as QF\_LIA for improved SMT solver performance. The main idea of their casting is to detect overflow and underflow of integer operations and use ITE operators to enforce the modular arithmetic of QF\_BV within QF\_LIA. We have implemented this casting by adding underflow/overflow detection and correction for all (mod $n$) operations in Algorithm \ref{algo:random}. Table \ref{tab:smtliasolvers} summarizes our results for all QF\_LIA solvers we could find that accept SMT-LIB 2.0. The benchmarks \texttt{Sz64\_1160}, \texttt{Sz64\_1159}, \texttt{Sz128\_2824}, \texttt{Sz128\_2823}, are of similar character to the set used in QF\_BV experiments shown in Table \ref{tab:smtsolvers} except that they are from considerably smaller (size-$64$ and size-$128$) FFTs. When attempting the same benchmarks as used in the QF\_BV experiments, the timeout of 24 hours was reached in all cases. Clearly, we must improve out initial QF\_LIA specification and/or the underlying QF\_LIA solver technology to make QF\_LIA solvers competitive with QF\_BV solvers on our problem.

\begin{table}[h]
\centering
\begin{tabular}{| l | c | c | c | c |}
\cline{2-5}

\multicolumn{1}{c|}{}&
{\bf Sz64\_1160}&
{\bf Sz64\_1159}&
{\bf Sz128\_2824}&
{\bf Sz128\_2823}\\

\cline{1-1}
{\bf Solver}&
{SAT \small time({\it s})}&
{UNSAT \small time({\it s})}&
{SAT \small time({\it s})}&
{UNSAT \small time({\it s})}\\

\hline{Z3}            & $ 3.8 \times 10^{-2}$ & $ 5.2 \times 10^{-2}$ & $ 1.0 \times 10^{0}$ & $ 2.0 \times 10^{3} $ \\
\hline{MathSAT5 app}  & $ 7.6 \times 10^{-2}$ & $ 1.9 \times 10^{-1}$ & $ 1.5 \times 10^{3}$ & Timeout \\
\hline{MathSAT5 main} & $ 5.5 \times 10^{-2}$ & $ 1.0 \times 10^{-1}$ & $ 1.0 \times 10^{2}$ & $ 1.4 \times 10^{3} $ \\

\hline
\end{tabular}

\caption{SMT QF\_LIA Solver Performance}
\label{tab:smtliasolvers}

\end{table}

\subsection{Algorithm Design}
\label{sec:algorithmdesign}

The FFTs found by SMT-based search and posted on our web site\cite{fftexamples} are witnesses that FFTs with lower total FLOP count than the split-radix exist even when all twiddle factors have modulus one, but are not practical algorithms in their current state. FFTs in widespread use usually can be defined succinctly in mathematical terms which leads to very regular patterns of twiddle factors in the FFT flowgraph. It is possible to formulate SMT constraints that require various forms of regularity in any satisfying solution. For example, additional constraints can be formulated and added to the model which allow costly twiddle factors only at specified nodes in the graph. A tighter constraint might force specific nodes to have prespecified twiddle factor values. A more relaxed constraint might just impose a relationship, such as a stride, between pairs of twiddle factors. In this way, the techniques described in this paper can be extended to do practical FFT algorithm design at the expense of proven optimality. Although this is a topic for further research, we highlight a few early experiments here.

The split-radix created by delayed twisting as described by Bernstein\cite{bernstein2007tangent} and Mateer\cite{2008-mateer} is very succinct yet can be used to generate a rich family of highly regular split-radix algorithms simply by choosing different legal twisting coefficients, $\zeta$. By examining the twiddle factor patterns generated by this algorithm, we determine that twiddle factors applied to ordered coefficients of a polynomial in the factor tree must have a constant stride (twisted), match constant values as seen in the classic decomposition (delayed twisting), or combine these two cases (twisting to something other than $x^*-1$). With constraints formulated and applied to the SMT model that require this pattern of twiddle factors, we no longer find solutions with total FLOP count less than the split-radix for size-256 FFT flowgraphs. We do find solutions with FLOP count equal to the split-radix as expected. This confirms the theorem by Mateer\cite{2008-mateer} that combinations of twisting, though very rich, will never lead to an FFT with FLOP count lower than the split-radix. Although the regularity imposed by twisting doesn't support our solutions, other types of regularity might.

The tangent FFT\cite{johnson2007modified}\cite{bernstein2007tangent} starts with a version of the conjugate split-radix FFT\cite{kamarconjugate}. In this algorithm, twiddle factors occur as conjugate pairs, where the conjugate pair is either at the top or bottom of a size-$2$ butterfly. The complex twiddle factors for a conjugate pair can be factored as
\begin{equation*}
 \cos{\alpha} (1 + i\tan{\alpha}), \cos{\gamma} (1 + i\tan{\gamma}).
\end{equation*}
Since $\alpha$ and $\gamma$ are conjugate angles, we know that $\cos{\alpha} = \cos{\gamma}$. Van Buskirk's trick\cite{lundy2007new}, which is exploited in the tangent FFT, moves these real scaling factors so that their cost is absorbed by other multiplications. With constraints formulated and applied to the SMT model that require twiddle factors to occur globally as conjugate pairs, we no longer find solutions with total FLOP count less than the split-radix for size-256 FFT flowgraphs. We do still find instances with FLOP count equal to the split-radix. It still may be possible to find solutions where conjugate pairs occur locally in specific places such that optimizations similar to Van Buskirk's can be beneficially applied.

An objective to minimize FLOP count is primarily academic given the capabilities of modern computing hardware. Other more practical objectives include enhancing precision or easing implementation. For example, avoiding twiddle factors where the real or imaginary part is a number very close to zero may enhance the precision of the final result. Alternatively, restricting all twiddle factors to some limited set may ease implementation, and we can formulate a SMT model that does just that. There are size-$32$ FFTs that use just two non-trivial costly twiddle factors, plus the free multiplications by $1$, $-1$, $i$ or $-i$. The minimum FLOP count for these algorithms is high at 616 compared to 456 for the split-radix but there may be benefits of having to multiply by just a few constants, especially in hardware implementations. If we increase the set of allowed non-trivial twiddle factors for a size-$32$ FFT to three, the minimum FLOP count is 536. For a size-$64$ FFT, we find a 2112 FLOP count solution that uses only non-trivial twiddle factor powers from the set $\{7,8,9\}$. Note that these twiddle factor powers include conjugates so that only three transcendental function computations or table look-ups are required. We have posted some examples of these FFTs on our web site\cite{fftexamples}.

\section{Conclusions and Future Work}
\label{sec:conclusions}

This paper presented a Boolean Satisfiability-based proof of the lowest FLOP count required by FFT algorithms up to size-$512$ with flowgraphs isomorphic to those generated by common power-of-two FFTs, and where all twiddle factors are $n^{th}$ roots of unity. Even with these constraints, we find FFTs requiring fewer FLOPs than the split-radix starting at size-$256$. At the core of this proof is a novel way to enumerate all FFTs realizable by a given flowgraph. Partitioning and symmetry reduction techniques are developed to make it possible to prove FLOP count bounds for larger size-$512$ FFTs. Finally, because the SAT-based formulation and search techniques are general, the paper introduced additional search objectives that mimic twiddle factor patterns from twisting, require conjugate twiddle factor pairs, and minimize the allowed values of twiddle factors.  

As seen from our experimental results, our biggest advances came from applying a high-level understanding of this problem to partition and detect symmetries in order to simplify the input for SMT and SAT solvers. We believe that more effort along these lines is a good direction for future work. In particular, work in symmetry detection and breaking to simplify SAT\cite{DBLP:series/faia/Sakallah09}\cite{katebi2010symmetry} is of interest. Just as this work uses computer automation to search for graph isomorphisms in the CNF structure, we can do the same at the more abstract FFT flowgraph level. Although symmetry breaking at the CNF level can benefit our problem, we believe that more progress can be made by exploiting higher-level symmetries in our specific problem. The challenge for us is to find useful isomorphisms with regard to twiddle factors, as the FFT flowgraph is very regular and rich in self-similarity. All our effort to partition and detect symmetry has been through human observation, and assistance from computer search may lead to better techniques.  

We have cast finding FFT algorithms as a bitvector problem and have used SAT and SMT solvers in a stand-alone manner to find solutions. This raises two questions for future work. First, is QF\_BV the best logic for this problem? Although we present preliminary results when cast as QF\_LIA in Section \ref{sec:liasolver}, there are still other casting techniques and logics to try\cite{zeng2001lpsat}\cite{brinkmann2002rtl}\cite{bozzano2006encoding}\cite{PBCompetitions}. Of particular interest to us is casting our problem to integer linear programming with the techniques presented by Brinkmann\cite{brinkmann2002rtl}. This may allow us to optimize larger problems, especially when optimality need not be proven. Second, can larger and more interesting instances of our problem be solved through tighter integration with SAT and/or SMT solvers? There are ideas for integrating optimization with SAT and SMT solvers\cite{nieuwenhuis2006sat}\cite{larrosa2011framework}\cite{cimatti2010satisfiability}. Solvers such as STP2\cite{STP} are providing APIs for tighter integration of user's applications. These directions remain unexplored by us but may yield significant improvements. 

Besides the future work just described, we plan additional work in three more directions. First, Section \ref{sec:algorithmdesign} highlights FFT algorithm design possible with techniques described in this paper. We will study the applicability of our techniques to practical FFT algorithm design, with cost objectives ranging from improved precision to implementation on specific hardware\cite{nordin2005automatic}. Second, we seek to impose regularity on our lowest FLOP count solutions to determine if they can be described more traditionally as succinct algorithms. This should also help us better characterize the FLOP savings as the the size of the FFT increases. Finally, we hope to ease the current constraint that all twiddle factors are $n^{th}$ roots of unity, and thus incorporate optimizations similar to those in Van Buskirk's\cite{lundy2007new} algorithm and the tangent FFT\cite{johnson2007modified}\cite{bernstein2007tangent} directly into our search.

\bibliographystyle{plainbv}
\bibliography{fft}

\end{document}